\let\csname equation*\endcsname\relax
\let\csname endequation*\endcsname\relax
\pgfplotsset{compat=1.12}
\newcommand{\ve}{\mathbf}
\newcommand{\m}{\mathbf}
\newtheorem{theorem}{Theorem}[section]
\newtheorem{corollary}{Corollary}[theorem]
\newtheorem{lemma}[theorem]{Lemma}
\newcommand{\IR}{\mathbb{R}}
\DeclareMathOperator{\diag}{diag}
\begin{document}

\title[A Linear State Space Model for Photoacoustic Imaging in an Acoustic Attenuating Media]{A Linear State Space Model for Photoacoustic Imaging in an Acoustic Attenuating Media}

\author{Oliver~Lang$^1$, P\'eter~Kov\'acs$^{1,3}$, Christian~Motz$^1$, Mario~Huemer$^1$, Thomas~Berer$^2$, Peter~Burgholzer$^2$}

\address{$^1$Institute of Signal Processing, Johannes Kepler University, Linz, Austria}
\address{$^2$Research Center for Non Destructive Testing (RECENDT), Linz, Austria}
\address{$^3$Department of Numerical Analysis, E\"otv\"os L. University, Budapest, Hungary}

					
\vspace{10pt}
\begin{indented}
\item[]July 2018
\end{indented}

\begin{abstract}
In photoacoustic imaging, ultrasound waves generated by a temperature rise after illumination of light absorbing structures are measured on the sample surface. These measurements are then used to reconstruct the optical absorption. We develop a method for reconstructing the absorption inside the sample based on a discrete linear state space reformulation of a partial differential equation that describes the propagation of the ultrasound waves. Fundamental properties of the corresponding state space model such as stability, observability and controllability are also analyzed. By using Stokes' equation, the frequency dependent attenuation of the ultrasound waves is incorporated into our model, therefore the proposed method is of general nature. As a consequence, this approach allows for inhomogeneous probes with arbitrary absorption profiles and it accounts for the decrease in laser intensity due to absorption. Furthermore, it provides a method for optimizing the laser modulation signal such that the accuracy of the estimated absorption profile is maximized. Utilizing the optimized laser modulation signal yields an increase in reconstruction accuracy compared to short laser pulses as well as chirp modulation in many scenarios. 
\end{abstract}

\vspace{2pc}
\noindent{\it Keywords}: Photoacoustic image reconstruction, acoustic attenuation, linear state space model, optimal temporal laser excitation.
%
%
%
%

\section{Introduction}
\label{sec:intro}

Photoacoustic imaging, also called optoacoustic or thermoacoustic imaging, is based on the generation of ultrasound following a temperature rise after illumination of light absorbing structures within a (semi)transparent and turbid material, such as a biological tissue. It provides optical images with specific absorption contrast \cite{Kruger2003, Beard2011, Wang2012}. Therefore, it offers greater specificity than conventional ultrasound imaging with the ability to detect hemoglobin, lipids, water and other light-absorbing chromophores, but with greater penetration depth than purely optical imaging modalities that rely on ballistic photons. In photoacoustic tomography the temporal evolution of the acoustic pressure field is sampled using an array of ultrasound detectors placed on or outside the tissue surface or by moving a single detector across the detection surface. Images of the optical absorption within the tissue are then reconstructed by solving an inverse source problem \cite{Wang2012, Burgholzer2007, Kuchment2008}. 

Usually, for illumination short laser pulses are used, where the generated acoustic pressure just after the pulse is proportional to the absorbed optical energy density \cite{Burgholzer2007}. Nevertheless, for the last decade also intensity-modulated continuous-wave lasers have been used for optical excitation of the ultrasound \cite{Maslov2008, Langer:16, Mienkina5507660, Su:11, Telenkov:11}. Temporal modulation of the excitation laser can also be applied for maximizing the accuracy and spatial resolution of the reconstructed image \cite{burg2018}. Various excitation schemes for frequency-domain photoacoustic tomography have been used in experiments and analyzed theoretically \cite{Baddour2008, Mohajerani2014}. In comparison to time-domain excitation, the generation of acoustic pressure waves is usually much less effective for frequency-domain excitation \cite{PMID:25302158}. To compensate for this, pulse compression techniques using frequency chirps and matched filtering are frequently employed \cite{Telenkov2009} and were compared to pulsed excitation \cite{Telenkov2010, Lashkari2011, Petschke_2010}. If technical limitations from acoustic detectors, called transducers, or amplifiers are neglected, short pulses always give better results using the same light energy or limiting the light fluence by the American National Standards Institute laser safety guidelines for skin \cite{Petschke_2010}. This is true if acoustic attenuation in the sample tissue can be neglected. 

In this work, the influence of acoustic attenuation for choosing the optimal temporal laser excitation is investigated. At depths larger than the range of the ballistic photons, i.e. more than a few hundreds of microns in tissue, light is multiply scattered and the spatial resolution is limited by acoustics. As higher acoustic frequencies, which have smaller wavelengths and allow a better resolution, are stronger damped than lower frequencies, the spatial resolution decreases with depth. The spatial resolution is limited at such depths by the acoustic diffraction limit that corresponds to the highest detectable frequency. The ratio of the imaging depth to the best spatial resolution is roughly a constant of 200 \cite{Wang2012}. Only recently published non-linear imaging methods, which use additional information such as sparsity of the imaged structure, can overcome the acoustic diffraction limit and are therefore called "super-resolution" \cite{Murray2017, Hojman2017}. Technical limitations, such as a bandwidth mismatch between the acoustic transducer and the acoustic signal on the sample surface or the noise of an amplifier can reduce the resolution in addition. 

There have been several attempts for mathematically compensating the acoustic attenuation to get images with a higher spatial resolution. Already in 2005, La Rivi\`{e}re et al.\ proposed an integral equation that relates the measured acoustic signal at a given transducer location in the presence of attenuation to the ideal signal in the absence of attenuation \cite{Riviere2005, LaRiviere2006}. Ammari et al.\ later gave a compact derivation of this integral equation directly using the wave equations, which is valid for all dimensions \cite{Ammari2012}. This implies that compensation of acoustic attenuation and dispersion in two or three dimensions can always be reduced to a one-dimensional problem in a two-stage process: first, for each detector location the ideal signal in the absence of attenuation is calculated from the measured signal. This is a one-dimensional reconstruction. In a second step, any reconstruction method for photoacoustic tomography can be used for reconstructions in higher dimensions \cite{Riviere2005, LaRiviere2006}. De\`{a}n-Ben et al.\ described the effects of acoustic attenuation (amplitude reduction and signal broadening), compared the effects of attenuation to the influence of the transducer bandwidth and space-dependent speed of sound and established a correction term similar to La Rivi\`{e}re, but for space-dependent attenuation \cite{Dean-Ben2011}. Kowar and Scherzer used a similar formulation for other lossy wave equations \cite{Kowar2012}. 

Burgholzer et al.\ have compensated directly the attenuation in photoacoustic tomography by using a time reversal finite differences method with a lossy wave equation  \cite{Burgholzer2007, PBurgholzer2007, PBurgholzer2010, PBurgholzer2010a, PBurgholzer2012}. Time reversal of the attenuation term causes the acoustic waves in the finite differences model to grow, as they propagate back in time through the tissue. At each time step the total acoustic energy is controlled by cutting high frequency signals, which would otherwise grow too quickly. This approach was later extended by Treeby et al.\ to account for general power law absorption behavior \cite{Treeby2010, Treeby2010a}. Inspired by attenuation compensation in seismology Treeby proposed a new method for attenuation compensation in photoacoustic tomography using time-variant filtering \cite{Treeby2013}.  

All these attempts have in common that the compensation of the frequency-dependent attenuation is an ill-posed problem, which needs regularization. The physical reason for this ill-posedness is thermodynamics: acoustic attenuation is an irreversible process and the entropy production, which is the dissipated energy of the attenuated acoustic wave divided by the temperature, is equal to the information loss for the reconstructed image \cite{PBurgholzer2012}. This limits also spatial resolution, which correlates with the information content of the reconstructed image. To reach this thermodynamic resolution limit for compensation of acoustic attenuation it is necessary to measure the broadband ultrasonic attenuation parameters of tissues or liquids very accurately \cite{Bauer-Marschallinger2012} and to evaluate the existing mathematical models to get an accurate description of attenuation \cite{Roitner2012}.

The first and most basic description of an attenuated acoustic wave has been given already in 1845 by Stokes \cite{stokes_2009} for fluids. It is based on the assumption that in the presence of attenuation, density changes in the fluid do not react immediately to pressure changes, but only with some relaxation time $\tau$. If $\tau$ is further expressed in terms of viscosity and specific heat, this equation is also known as the thermoviscous equation, which describes approximately a quadratic increase of attenuation with frequency and describes attenuation in liquids very well. Stokes' equation is not only causal in the sense, that it satisfies Kramers-Kronig equation, but also satisfies, as shown by Buckingham, a stronger causality condition: everywhere the predicted pressure pulse is maximally flat at the instant the source is activated: the pressure and all its time derivatives are identical to zero at the origin of time \cite{Buckingham2005}. 

In the proposed approach, the linear partial differential equation (PDE), which describes the propagation and attenuation of the acoustic wave, is discretized to bring it into the form of a linear state space model (SSM). This can be done for any of the wave equations describing acoustic attenuation, as long as the source term containing the heating function, which describes the absorbed optical energy density per unit time deposited at a certain depth, can be written as a product of a space and a time dependent function. In this work Stokes' equation is used as being the first one and just one attenuation term has to be added compared to the wave equation without attenuation \cite{Buckingham2005}:
\begin{equation}
\frac{\partial^2 p(z,t)}{\partial z^2} - \frac{1}{c_0^2} \frac{\partial^2 p(z,t)}{\partial t^2} + \tau \frac{\partial^3 p(z,t)}{\partial t \partial z^2} = - \frac{\beta}{C_p} \frac{\partial H(z,t)}{\partial t}, \label{equ:RECENDT_001}
\end{equation}
where $p(z,t)$ is the local pressure at depth $z$ and at time $t$, $c_0$ is the ultrasound wave velocity, $\tau$ is the relaxation time, $\beta$ denotes the coefficient of thermal expansion, and $C_p$ the specific heat. The heating function $H(z,t)=R(z)i(t)$ is the product of the fractional energy absorption $R(z)$ at depth $z$ and the temporal profile of the illumination $i(t)$ \cite{Riviere2005}. 

Eq.~\eqref{equ:RECENDT_001} is a one-dimensional equation, which covers all the irreversibility of acoustic attenuation. The integral equation \cite{Riviere2005, LaRiviere2006, Ammari2012} that relates the measured acoustic signal at a given transducer location in the presence of attenuation to the ideal signal in the absence of attenuation is the same for all dimensions and therefore it is sufficient to take the one-dimensional equation \eqref{equ:RECENDT_001}. Two- or three-dimensional reconstructions can be performed in a two-stage process as mentioned above.

The main idea of the proposed approach is the following:
\begin{enumerate}
\item Discretize the PDE in \eqref{equ:RECENDT_001} and bring it into the form of a linear SSM. We do this in a form such that the measurements are linearly connected with a vector that is related to the absorption profile. 
\item Apply linear estimators on the measurements to estimate this vector. During the discretization, we will show that this vector origins from a non-linear transformation of the absorption profile. 
\item Estimate the absorption profile based on the estimated vector. 
\end{enumerate} 

We will show that our method observes the following features:
\begin{enumerate}[a)]
\item it allows for probes with arbitrary absorption profiles;
\item it accounts for frequency dependent attenuation of the ultrasound waves;
\item the decrease in laser intensity because of absorption is incorporated;
\item the laser modulation signal can be completely arbitrary and it is not constrained to signals with a well-behaving autocorrelation function;
\item structural properties of the model such as stability, observability and controllability can be easily verified. 
\end{enumerate} 
Besides the fact that the proposed estimation method allows to estimate the absorption profile considering all mentioned effects, it furthermore allows to optimize the laser modulation signal such that the accuracy of the estimated absorption profile is maximized. We will show how this optimization is performed and we will demonstrate the improvement in accuracy by utilizing the optimized laser modulation signals. It will turn out that the reconstruction accuracy can significantly be increased by utilizing the optimized laser modulation signal.


In what follows the lower-case bold face variables ($\ve{a}$, $\ve{b}$,...) indicate vectors, and the upper-case bold face variables ($\m{A}$, $\m{B}$,...) indicate matrices. 
We further use $(\cdot)^T$ to denote transposition, $\m{I}^{n\times n}$ to denote the identity matrix of size $n\times n$, and $\m{0}^{m\times n}$ to denote the zero matrix of size $m\times n$. If the dimensions are clear from context we simply write $\m{I}$ and $\m{0}$, respectively. 

\section{Discretization of the PDE}
\label{sec:discretization}

In this section, the workflow of deriving a discrete SSM that approximates the physical processes according to Stokes' equation in \eqref{equ:RECENDT_001} is described. This is done by utilizing finite differences in a 1D space. Furthermore, the discretization is performed in a way such that the unknown absorption profile is isolated in a vector, which can be estimated in a follow-up step. We begin with some notational definitions. 

We assume the 1D probe begins at $z=0$. The z-axis is divided into $N_\ve{z}$ equally spaced elements. The width of each element is denoted by $\Delta_z$ and the left border of each element is located at $z_n = n \Delta_z$ with $n = 0, \hdots, N_\ve{z}-1$. The locations $z_n$ are referred to as grid points. The vector $\ve{z} \in \mathbb{R}^{N_\ve{z} \times 1}$ is defined as the vector containing all grid points $z_n$ for $n = 0, \hdots, N_\ve{z}-1$. The function $p(z,t)$ in \eqref{equ:RECENDT_001} describes the local pressure at location $z$ and at time $t$. Based on that, we define the vector $\ve{p}_k \in \mathbb{R}^{N_\ve{z} \times 1}$ as the local pressure at all grid points in $\ve{z}$ at the time $t = k \Delta_t$, where  $\Delta_t$ is the step width of the time discretization.


We now turn to the first term in \eqref{equ:RECENDT_001}. The second derivative of $p(z,t)$ w.r.t. $z$ can be approximated using the central finite difference of second order given by
\begin{align}
\frac{\partial^2 p(z,t)}{\partial z^2} \approx \frac{1}{\Delta_z^2} \left(  p(z_{n-1},t) - 2 p(z_{n},t) + p(z_{n+1},t) \right). \label{equ:RECENDT_004}
\end{align}
For the time $t=k \Delta_t$, the right hand side of \eqref{equ:RECENDT_004} can be written as a vector matrix product according to
\begin{align}
\frac{\partial^2 p(z,t)}{\partial z^2} \rightarrow \m{D} \ve{p}_k, \label{equ:RECENDT_005}
\end{align}
where the matrix $\m{D}\in \mathbb{R}^{N_\ve{z} \times N_\ve{z}}$ is given by
\begin{align}
 \m{D} = \frac{1}{\Delta_z^2} \begin{bmatrix}
 -2      & 1     & 0      & 0      & 0      & \hdots & 0 \\
 1      & -2     & 1      & 0      & 0      & \hdots & 0 \\
 0      &  1     & -2     & 1      & 0      & \hdots & 0 \\ 
 \vdots &        & \ddots & \ddots & \ddots &        & \vdots \\
 0      & \hdots & 0      & 1      & -2     & 1     & 0 \\
 0      & \hdots & 0      & 0      & 1      & -2     & 1 \\
 0      & \hdots & 0      & 0      & 0      & 1     & -2
\end{bmatrix}.  \label{equ:RECENDT_006}
\end{align}

The second term in \eqref{equ:RECENDT_001} can be discretized and written in terms of $\ve{p}_k$ as
\begin{align}
-\frac{1}{c_0^2} \frac{\partial^2 p(z,t)}{\partial t^2} \rightarrow -\frac{1}{c_0^2 \Delta_t^2} \left( \ve{p}_{k+1} - 2 \ve{p}_k + \ve{p}_{k-1} \right), \label{equ:RECENDT_007}
\end{align}
which corresponds to the second order central finite difference. Similarly, the third term in \eqref{equ:RECENDT_001} can be discretized using \eqref{equ:RECENDT_005} as follows:
\begin{align}
\tau \frac{\partial^3 p(z,t)}{\partial t \partial z^2} \rightarrow \frac{\tau}{2 \Delta_t}\m{D} \left( \ve{p}_{k+1} - \ve{p}_{k-1} \right) \label{equ:RECENDT_008}\,.
\end{align}

For the right hand side of \eqref{equ:RECENDT_001}, we utilize $H(z,t) = R(z) i(t)$ \cite{Petschke_2010}, where $R(z)$ accounts for the absorption and where $i(t)$ is the laser intensity or laser modulation function. For a homogeneous medium, it holds that $R(z) = \chi \mu \mathrm{e}^{-\mu z} $, where $\mu$ is the absorption coefficient of the laser light and $\chi $ is the fluence of the laser light at the sample surface. Decrease in irradiance of the laser intensity is considered via the term $\mathrm{e}^{-\mu z}$ within $R(z)$. We consider an inhomogeneous probe $\mu(z)$, which can be discretized as $\mu_n = \mu(z_n)$ for $n = 0,\hdots, N_\ve{z}-1$. The vector $\bm{\mu} \in \mathbb{R}^{N_\ve{z} \times 1}$ with the elements $\mu_n$ describes the discretized absorption profile of the probe. The term $a_n = \mathrm{e}^{-\mu_n \Delta_z}$ approximately describes the attenuation of the laser intensity between the grid points $z_n$ and $z_{n+1}$. Let $i_k$ denote the continuous laser intensity $i(t)$ at time $t = k \Delta_t$ at the surface of the probe ($z=0$) and let $\tilde{i}_k(z_n)$ denote the laser intensity at time $t = k \Delta_t$ and at the $n^{\text{th}}$ grid point $z_n$, then we have
\begin{align}
\tilde{i}_k(z_n) \approx a_{n-1} a_{n-2} \hdots a_1 a_0 i_k.  \label{equ:RECENDT_011}
\end{align}
$\tilde{i}_k(z_n)$ evaluated for every grid point $z_n$ for $n = 0,\hdots, N_\ve{z}-1$ can be written in vector form as $\tilde{\ve{i}}_k \in \mathbb{R}^{N_\ve{z} \times 1}$ by
\begin{align}
\tilde{\ve{i}}_k \approx \begin{bmatrix}
1 \\ a_0 \\ a_1 a_0 \\ \vdots \\ a_{N_\ve{z}-2} a_{N_\ve{z}-3} \hdots a_1 a_0
\end{bmatrix} i_k. \label{equ:RECENDT_012}
\end{align}
We are now able to evaluate $H(z,t) = R(z) i(t)$ for every grid point at time $t = k \Delta_t$, which is denoted by $\ve{h}_k \in \mathbb{R}^{N_\ve{z} \times 1}$. Then, $\ve{h}_k$ follows from \eqref{equ:RECENDT_012} by multiplying every element of $\tilde{\ve{i}}_k$ with $\chi$ and the corresponding $\mu_n$, yielding
\begin{align}
\ve{h}_k \approx & \, \chi \underbrace{\begin{bmatrix}
\mu_0 \\ \mu_1 a_0 \\ \mu_2 a_1 a_0 \\ \vdots \\  \mu_{N_\ve{z}-1} a_{N_\ve{z}-2} a_{N_\ve{z}-3} \hdots a_1 a_0
\end{bmatrix}}_{\ve{d}} i_k \label{equ:RECENDT_012a} \\
=& \, \chi \ve{d} i_k. \label{equ:RECENDT_013}
\end{align}
With this result, the right hand side of \eqref{equ:RECENDT_001} follows to
\begin{align}
-\frac{\beta}{C_p} \frac{\partial H(z,t)}{\partial t} \rightarrow &  -\frac{\beta}{C_p} \frac{\partial \ve{h}_k}{\partial t} \label{equ:RECENDT_009} \\
\approx& \underbrace{-\frac{\beta \chi}{C_p \Delta_t} \ve{d}}_{\ve{b}} \underbrace{\left( i_k - i_{k-1} \right)}_{u_k} \label{equ:RECENDT_014} \\
=& \ve{b} u_k, \label{equ:RECENDT_015} 
\end{align}
where the backward difference was used to approximate $\partial \ve{h}_k / \partial t$ in \eqref{equ:RECENDT_009}.

Now, the discretized form of the PDE in \eqref{equ:RECENDT_001} follows by combining \eqref{equ:RECENDT_005}, \eqref{equ:RECENDT_007}, \eqref{equ:RECENDT_008} and \eqref{equ:RECENDT_015} as
\begin{align}
& \m{D} \ve{p}_k \hspace{-1pt}  - \hspace{-1pt} \frac{1}{c_0^2 \Delta_t^2} \left( \ve{p}_{k+1} - 2 \ve{p}_k + \ve{p}_{k-1} \right) +  \frac{\tau}{2 \Delta_t}\m{D} \left( \ve{p}_{k+1} \hspace{-1pt} -  \ve{p}_{k-1} \right)  \nonumber \\
&= \underbrace{\left(-\frac{1}{c_0^2 \Delta_t^2} \m{I} + \frac{\tau}{2 \Delta_t}\m{D} \right)}_{\m{M}_1} \ve{p}_{k+1} + \underbrace{ \left( \m{D} + \frac{2}{c_0^2 \Delta_t^2}  \m{I} \right)}_{\m{M}_2} \ve{p}_k   + \underbrace{\left( -\frac{1}{c_0^2 \Delta_t^2}  \m{I}  - \frac{\tau}{2 \Delta_t}\m{D} \right)}_{\m{M}_3} \ve{p}_{k-1}  \label{equ:RECENDT_016}  \\
& = \m{M}_1 \ve{p}_{k+1} + \m{M}_2 \ve{p}_k + \m{M}_3 \ve{p}_{k-1} =  \ve{b} u_k, \label{equ:RECENDT_017} 
\end{align}
and further
\begin{align}
  \ve{p}_{k+1} =& \underbrace{-\m{M}_1^{-1} \m{M}_2}_{\m{M}_4} \ve{p}_k + \underbrace{\left(-\m{M}_1^{-1} \m{M}_3\right)}_{\m{M}_5} \ve{p}_{k-1} + \underbrace{\m{M}_1^{-1}\ve{b}}_{\ve{f}} u_k \label{equ:RECENDT_017b} \\
  =& \m{M}_4 \ve{p}_k + \m{M}_5 \ve{p}_{k-1} + \ve{f} u_k. \label{equ:RECENDT_017a} 
\end{align}
This result shows that for approximating the pressure profile $\ve{p}_{k+1}$, the current pressure profile $\ve{p}_k$ as well as the previous one $\ve{p}_{k-1}$ are required. In order to bring this equation into the form of an SSM, we define the state vector
\begin{align}
\ve{x}_k = \begin{bmatrix}
\ve{p}_k \\ \ve{p}_{k-1}
\end{bmatrix}, \label{equ:RECENDT_018} 
\end{align}
which allows to bring \eqref{equ:RECENDT_017a} into the form of
\begin{align}
\ve{x}_{k+1} = \begin{bmatrix}
\ve{p}_{k+1} \\ \ve{p}_{k}
\end{bmatrix} =& \underbrace{\begin{bmatrix}
\m{M}_4 & \m{M}_5 \\
\m{I} & \m{0} 
\end{bmatrix}}_{\m{A}} \ve{x}_{k} + \underbrace{\begin{bmatrix} \ve{f} \\ \ve{0} \end{bmatrix}}_{\ve{g}}  u_k \label{equ:RECENDT_019} \\
=& \m{A} \ve{x}_{k} + \ve{g} u_k, \label{equ:RECENDT_020}
\end{align}
which represents the final form of the discretized Stokes' equation in \eqref{equ:RECENDT_001}.

The next step for deriving an SSM representation of \eqref{equ:RECENDT_001} is to develop the measurement equation. The measurement at time instance $t = k \Delta_t$ is denoted by $y_k$ and it is given by the pressure at the surface of the probe plus some additive measurement noise $w_k$, according to
\begin{align}
y_k =& \, p(z=0, t = k \Delta_t) + w_k  \label{equ:RECENDT_021} \\
=&\, \ve{c}^T \ve{x}_{k}  + w_k,  \label{equ:RECENDT_022}
\end{align}
where $\ve{c}^T \in \mathbb{R}^{1 \times 2N_\ve{z}}$ is a row vector with a $1$ at its first entry and all zeros elsewhere. The measurement noise $w_k$ in \eqref{equ:RECENDT_022} is assumed to be zero mean white Gaussian noise with variance $\sigma_w^2$. Combining \eqref{equ:RECENDT_020} and \eqref{equ:RECENDT_022} forms the final expression for the SSM
\begin{align}
\ve{x}_{k+1} &= \m{A} \ve{x}_{k} + \ve{g} u_k + \ve{q}_k \label{equ:RECENDT_002} \\
y_k &= \ve{c}^T \ve{x}_{k} + w_k. \label{equ:RECENDT_003}
\end{align}
Note that in \eqref{equ:RECENDT_002} we added $\ve{q}_k$ as an additional noise term which shall account for model errors due to the discretization process. We will neglect this noise term in the following section for simplicity and clarity. However, in Section~\ref{sec:experiments} we will assume that the upper half of $\ve{q}_k$ is a zero mean Gaussian random vector with covariance matrix $\sigma_q^2 \m{I}^{N_\ve{z} \times N_\ve{z}}$ and the lower half of $\ve{q}_k$ is zero. 

We draw the attention to the fact that in \eqref{equ:RECENDT_003} the variables $y_k$'s are assumed to be pressure measurements. In cases where measurements are given in another form, such as the deflection of the surface of the probe, \eqref{equ:RECENDT_003} needs to be adapted accordingly.
		
Also note that in our model the only unknown material parameter is the absorption profile $\bm{\mu}$, which is related to the vector $\ve{d}$ via \eqref{equ:RECENDT_012a}. Besides $\bm{\mu}$, also the pressure profile $\ve{p}_k$ inside the state vector $\ve{x}_{k}$ is unknown for all values of $k$. In Section~\ref{sec:Linear_Model_Formulation}, we will rewrite the SSM in \eqref{equ:RECENDT_002} and \eqref{equ:RECENDT_003} such that the vector $\ve{d}$ is linearly connected with the measurements. We will show that the vector $\ve{d}$ can be linearly estimated without the need of estimating the pressure profile $\ve{p}_k$ or the state vector $\ve{x}_{k}$ at any time. For the pressure waves modelled with \eqref{equ:RECENDT_002}, reflections at the boundaries of the simulation area are observed. If these reflections do not meet with the properties of underlying physical processes of the probe, then there are two common options to prevent them:
\begin{itemize}
\item introduce a perfectly matched layer at the boundaries of the simulation area;
\item make the simulation area larger such that the reflected waves do not disturb the measurements. 
\end{itemize}
In this work, we implement the latter solution and thus we refer to the unaltered simulation area as \emph{area of interest}.

\section{Properties of the state space model}
\label{sec:Stability}
In control system design, stability is a fundamental requirement, which describes the properties of the equilibrium points in the state space. There are various interpretations of the stability 
such as convergence to an equilibrium, the speed of the convergence, boundedness of the input, the output and the state. In this section, we give a necessary and sufficient condition for the asymptotic stability of the noise free SSM, i.e.\ $\ve{q}_k=\ve{0},\;w_k=0$ in \eqref{equ:RECENDT_002}-\eqref{equ:RECENDT_003}, and show that the construction of this model is well-defined. Additionally, we prove that the state space representation of \eqref{equ:RECENDT_001} is observable, which means that it is possible to determine any arbitrary initial state $\m{x}_0$ from observing a finite sequence of output variables $y_k\,(0\leq k \leq m)$. Finally, we show that any desired final state can be reached from any initial state by using the proper input signal $u_k$, i.e. the system is controllable.

The proposed discrete SSM is also a linear and time-invariant (LTI) system, which is asymptotically stable if and only if all the eigenvalues of the state matrix $\m{A}$ lie inside the unit circle. In order to examine this property, we first prove some identities about the eigenvalues of $\m{A}$.

\begin{lemma}
\label{lemma:eig_Mi}
The eigenvalues of the matrices $\m{M}_i\in\IR^{N_{\m{z}} \times N_{\m{z}}}\; (i=1,2,3)$ in \eqref{equ:RECENDT_016} can be given in explicit forms:
\begin{equation}
		\lambda_k(\m{M}_{1,3})=\pm \gamma \lambda_k(\m{D}) - \alpha, \quad \lambda_k(\m{M}_2)=\lambda_k(\m{D}) + 2 \alpha, \qquad (1\leq k \leq N_{\m{z}})\,, 
\label{eq:eig_Mi}
\end{equation}
where $\lambda_k(\m{D})<0$ denotes the eigenvalues of $\m{D}$, $\gamma=\dfrac{\tau}{2\Delta_t}>0$ and $\alpha=\dfrac{1}{c_0^2\Delta_t^2}>0$.
\end{lemma}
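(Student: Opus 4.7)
The plan is to exploit that each $\m{M}_i$ from \eqref{equ:RECENDT_016} is just a real linear combination of $\m{I}$ and $\m{D}$. With the abbreviations $\gamma=\tau/(2\Delta_t)$ and $\alpha=1/(c_0^2\Delta_t^2)$, one reads off $\m{M}_1=\gamma\m{D}-\alpha\m{I}$, $\m{M}_2=\m{D}+2\alpha\m{I}$, and $\m{M}_3=-\gamma\m{D}-\alpha\m{I}$. Any eigenvector $\ve{v}_k$ of $\m{D}$ with eigenvalue $\lambda_k(\m{D})$ is therefore automatically an eigenvector of each $\m{M}_i$, with eigenvalue obtained by the same linear combination. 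This is exactly the content of \eqref{eq:eig_Mi}, so the main formula reduces to a one-line calculation.

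To turn this observation into a complete spectral description rather than a collection of eigenpairs, I would note that $\m{D}$ is real symmetric and tridiagonal, hence orthogonally diagonalizable by a basis of $N_{\m{z}}$ linearly independent eigenvectors. Because the $\m{M}_i$ are simultaneously diagonalized by this same basis, the relations in \eqref{eq:eig_Mi} exhaust the spectrum of each $\m{M}_i$ with correct multiplicities, and the index $k$ can be taken to be the common index used for the eigenvectors of $\m{D}$ on both sides of \eqref{eq:eig_Mi}.

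The only non-routine piece is the strict inequality $\lambda_k(\m{D})<0$. I would establish this directly by rewriting the quadratic form as
\begin{equation*}
-\Delta_z^2\,\ve{x}^T\m{D}\ve{x}=x_1^2+x_{N_{\m{z}}}^2+\sum_{n=1}^{N_{\m{z}}-1}(x_{n+1}-x_n)^2,
\end{equation*}
which is strictly positive for every nonzero $\ve{x}\in\IR^{N_{\m{z}}}$ by inspection, so $-\m{D}$ is positive definite. Alternatively, since $\Delta_z^2\m{D}$ is the textbook Dirichlet discrete Laplacian, one can simply cite its well-known closed-form spectrum $\lambda_k(\m{D})=-\tfrac{4}{\Delta_z^2}\sin^2\!\left(\tfrac{k\pi}{2(N_{\m{z}}+1)}\right)$ for $k=1,\dots,N_{\m{z}}$. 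This negativity is the only place where the specific tridiagonal structure of $\m{D}$ beyond mere symmetry enters the argument, and it is the mildest of obstacles; everything else is essentially bookkeeping.
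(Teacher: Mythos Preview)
Your proof is correct and follows essentially the same approach as the paper: both exploit that each $\m{M}_i$ is a linear combination of $\m{I}$ and $\m{D}$, so the eigenvectors of $\m{D}$ are shared and the eigenvalues shift accordingly. The paper establishes $\lambda_k(\m{D})<0$ by quoting the closed-form Toeplitz eigenvalues (your second alternative), whereas your quadratic-form argument is a slightly more self-contained route to the same fact.
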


\begin{proof} 
Note that the second order finite difference matrix $\m{D}$ is also a symmetric tridiagonal Toeplitz matrix. According to Section~3 in \cite{diff_eig_siam}, the eigenvalues of $\m{D}$ and the corresponding eigenvectors are 
\begin{align}
	\label{eq:eig_D}
	\lambda_k(\m{D})&=-\frac{1}{\Delta^2_z}\cdot\left(2-2\cos\left(\frac{k\pi}{N_{\m{z}}+1}\right)\right) & (1 \leq k \leq N_{\m{z}})\,,\\
	\m{v}_k\left[j\right]&=\sqrt{\frac{2}{N_{\m{z}}+1}} \cdot \sin \left(\frac{kj\pi}{N_{\m{z}}+1}\right) & (1 \leq k \leq N_{\m{z}})\,,
	\label{eq:eigv_D}
\end{align}
where $\lambda_k(\m{D})$ are always strictly negative. Now let us consider the eigenvalues $\lambda_k(\m{D})$ and the corresponding eigenvectors $\m{v}_k$. Then we have
\begin{equation}
	\m{M}_1 \m{v}_k=(\gamma \m{D} - \alpha \m{I})\m{v}_k=\gamma \m{D} \m{v}_k - \alpha \m{v}_k=\gamma \lambda_k(\m{D}) \m{v}_k  -\alpha \m{v}_k=(\gamma \lambda_k(\m{D})  -\alpha) \m{v}_k\,,
\end{equation}  
hence $(\gamma \lambda_k(\m{D}) - \alpha,\;\m{v}_k)$ are eigenpairs of $\m{M}_1$ for $1\leq k \leq N_{\m{z}}$. The proof is analogous for $\m{M}_2$ and $\m{M}_3$.
\end{proof} 
\begin{corollary}
\label{cor:negdef}
If $c_0,\tau,\Delta_t,\Delta_z\in\IR_+$, $\m{M}_1$ is negative definite, and thus it is invertible. Therefore, the state matrix $\m{A}$ in \eqref{equ:RECENDT_019} is well-defined.
\end{corollary}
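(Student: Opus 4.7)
The plan is to read off negative definiteness of $\m{M}_1$ directly from Lemma~\ref{lemma:eig_Mi}, then derive invertibility and well-definedness of $\m{A}$ as immediate consequences. First I would observe that $\m{M}_1 = -\alpha\m{I} + \gamma \m{D}$ inherits symmetry from $\m{D}$ (which is manifestly symmetric from \eqref{equ:RECENDT_006}) and from the identity. This symmetry is what will let me promote an eigenvalue sign statement to a definiteness statement — without it, negative eigenvalues alone would not suffice.

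Next I would apply Lemma~\ref{lemma:eig_Mi} to conclude $\lambda_k(\m{M}_1) = \gamma\lambda_k(\m{D}) - \alpha$. Since the hypothesis $c_0,\tau,\Delta_t,\Delta_z \in \IR_+$ gives $\gamma = \tau/(2\Delta_t) > 0$ and $\alpha = 1/(c_0^2\Delta_t^2) > 0$, and since $\lambda_k(\m{D}) < 0$ by \eqref{eq:eig_D}, each eigenvalue satisfies $\lambda_k(\m{M}_1) < -\alpha < 0$. Combined with symmetry, this yields negative definiteness of $\m{M}_1$. Moreover, since none of the eigenvalues equals zero, $\det(\m{M}_1) \neq 0$, hence $\m{M}_1^{-1}$ exists.

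Finally I would trace back through the construction in \eqref{equ:RECENDT_017b}: invertibility of $\m{M}_1$ is precisely what is needed for $\m{M}_4 = -\m{M}_1^{-1}\m{M}_2$ and $\m{M}_5 = -\m{M}_1^{-1}\m{M}_3$ to be well-defined, and these two blocks (together with $\m{I}$ and $\m{0}$) are all that enter the state matrix $\m{A}$ in \eqref{equ:RECENDT_019}. Therefore $\m{A}$ itself is well-defined under the stated positivity hypotheses.

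I do not anticipate any real obstacle here — the entire corollary is essentially a packaging of Lemma~\ref{lemma:eig_Mi} with the observation that $\m{M}_1$ is symmetric. The only subtle point worth flagging explicitly in the write-up is that symmetry is required to pass from ``all eigenvalues strictly negative'' to ``negative definite,'' and that this symmetry is visible by inspection of the formula $\m{M}_1 = -\alpha\m{I} + \gamma\m{D}$ together with $\m{D} = \m{D}^T$.
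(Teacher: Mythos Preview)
Your proposal is correct and follows exactly the implicit argument the paper intends: the corollary is stated without proof immediately after Lemma~\ref{lemma:eig_Mi}, so the paper's ``proof'' is simply that the eigenvalue formula $\lambda_k(\m{M}_1)=\gamma\lambda_k(\m{D})-\alpha$ with $\gamma,\alpha>0$ and $\lambda_k(\m{D})<0$ forces all eigenvalues to be strictly negative. Your write-up is in fact slightly more careful than the paper, since you explicitly flag the symmetry of $\m{M}_1$ needed to upgrade ``all eigenvalues negative'' to ``negative definite.''
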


\begin{lemma}
\label{lemma:eig_A}
If the state matrix $\m{A}$ in \eqref{equ:RECENDT_017b} is invertible, then its eigenvalues can be written as
\begin{equation}
		\lambda_{2k-1,2k}(\m{A})=\frac{-\lambda_k(\m{M}_2) \pm \sqrt{\lambda^2_k(\m{M}_2)-4\cdot\lambda_k(\m{M}_1)\lambda_k(\m{M}_3)}}{2\cdot\lambda_k(\m{M}_1)} \qquad (1\leq k \leq N_{\m{z}})\,. 
\label{eq:eig_A}
\end{equation}
\end{lemma}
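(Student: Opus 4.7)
The plan is to exploit the block companion structure of $\m{A}$ together with the fact, established in Lemma~\ref{lemma:eig_Mi}, that $\m{M}_1,\m{M}_2,\m{M}_3$ are all affine combinations of $\m{D}$ and $\m{I}$ and therefore simultaneously diagonalizable with the common eigenbasis $\{\m{v}_k\}_{k=1}^{N_{\m{z}}}$. The invertibility of $\m{A}$, which is given as hypothesis, guarantees that every eigenvalue $\lambda$ of $\m{A}$ is nonzero, which is the only fact needed to avoid a degenerate case below.

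First, I would suppose that $(\lambda,\m{w})$ is an eigenpair of $\m{A}$ with $\m{w}=[\m{w}_1^T,\m{w}_2^T]^T$ partitioned conformally with the $2\times 2$ block structure. The eigenvalue equation $\m{A}\m{w}=\lambda \m{w}$ then splits into
\begin{equation}
\m{M}_4 \m{w}_1 + \m{M}_5 \m{w}_2 = \lambda \m{w}_1, \qquad \m{w}_1 = \lambda \m{w}_2.
\end{equation}
Substituting $\m{w}_2 = \m{w}_1/\lambda$ into the first equation, multiplying through by $\lambda$, and using the definitions $\m{M}_4=-\m{M}_1^{-1}\m{M}_2$, $\m{M}_5=-\m{M}_1^{-1}\m{M}_3$, followed by left multiplication by $\m{M}_1$, yields the quadratic matrix pencil equation
\begin{equation}
\bigl(\lambda^2 \m{M}_1 + \lambda \m{M}_2 + \m{M}_3\bigr)\m{w}_1 = \m{0}.
\label{eq:pencil}
\end{equation}

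Next I would diagonalize. Since $\m{M}_1,\m{M}_2,\m{M}_3$ share the eigenvectors $\m{v}_k$ of $\m{D}$ (by the argument already used in the proof of Lemma~\ref{lemma:eig_Mi}), I would pick $\m{w}_1=\m{v}_k$ for an arbitrary $k\in\{1,\ldots,N_{\m{z}}\}$. Then \eqref{eq:pencil} reduces to the scalar quadratic
\begin{equation}
\lambda^2\,\lambda_k(\m{M}_1) + \lambda\,\lambda_k(\m{M}_2) + \lambda_k(\m{M}_3) = 0,
\end{equation}
whose two roots are exactly the expression claimed in \eqref{eq:eig_A}, after applying the standard quadratic formula and recalling that $\lambda_k(\m{M}_1)\neq 0$ by Corollary~\ref{cor:negdef}. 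For each root $\lambda_{2k-1}$ or $\lambda_{2k}$, the full eigenvector of $\m{A}$ is recovered as $\m{w}=[\m{v}_k^T,\m{v}_k^T/\lambda]^T$.

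The only step that requires a little care is the accounting: we must recover all $2N_{\m{z}}$ eigenvalues of $\m{A}$, and the procedure above produces exactly $2N_{\m{z}}$ candidates (two per index $k$). I would close the argument by remarking that the eigenvectors $[\m{v}_k^T,\m{v}_k^T/\lambda_{2k-1,2k}]^T$ associated with different indices $k$ are linearly independent because the top blocks $\m{v}_k$ are, and within a fixed $k$ the two vectors differ only in the scaling of their lower block and are linearly independent whenever the two roots are distinct. In the exceptional case of a double root, the formula still lists the same value twice, which is consistent with counting eigenvalues with algebraic multiplicity; no separate case analysis is needed because the statement of the lemma only concerns the values $\lambda_{2k-1,2k}(\m{A})$. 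The main (mild) obstacle is simply this bookkeeping, i.e.\ confirming that the quadratic-pencil reduction truly exhausts the spectrum of $\m{A}$ and is not merely a necessary condition.
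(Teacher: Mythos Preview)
Your proof is correct and follows essentially the same approach as the paper: both reduce the eigenvalue problem for the block companion matrix $\m{A}$ to the scalar quadratic pencil $\lambda^2\lambda_k(\m{M}_1)+\lambda\lambda_k(\m{M}_2)+\lambda_k(\m{M}_3)=0$ by exploiting the simultaneous diagonalizability of $\m{M}_1,\m{M}_2,\m{M}_3$ established in Lemma~\ref{lemma:eig_Mi}. The only cosmetic difference is that the paper first performs an explicit orthogonal similarity $\m{\widetilde{A}}=\m{T}\m{A}\m{T}^T$ with $\m{T}=\mathrm{diag}(\m{V}^T,\m{V}^T)$ so that the blocks of $\m{\widetilde{A}}$ are already diagonal, whereas you stay with $\m{A}$ and plug $\m{w}_1=\m{v}_k$ directly into the pencil equation; these are two equivalent ways of invoking the common eigenbasis, and the paper's counting step is no more detailed than yours.
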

\begin{proof} 
In Lemma~\ref{lemma:eig_Mi} we showed that the matrices $\m{M}_i\;(i=1,2,3)$ have the same eigenvectors $\m{v}_k\;(1\leq k\leq N_{\m{z}})$. In addition, $\m{M}_i$'s are symmetric and so the matrix $\m{V}=(\m{v}_1,\ldots,\m{v}_{N_{\m{z}}})$ is orthogonal. Therefore, they can be diagonalized as follows 
\begin{equation*}
	\m{M}_i=\m{V} \m{\Lambda}_i \m{V}^T\qquad (i=1,2,3)\,,
\end{equation*}  
where $\m{\Lambda}_i=\diag(\lambda_1(\m{M}_i),\ldots,\lambda_{N_{\m{z}}}(\m{M}_i))$. Furthermore, by \eqref{equ:RECENDT_017b} we have that 
\begin{equation*}
	\m{M}_4=-\m{M}_1^{-1} \m{M}_2=-\m{V} \m{\Lambda}_1^{-1} \m{ \Lambda}_2 \m{V}^T\,,\qquad \m{M}_5=-\m{M}_1^{-1}\m{M}_3=-\m{V} \m{\Lambda}_1^{-1} \m{\Lambda}_3 \m{V}^T\,. 
\end{equation*}  
Using these identities, the state space model can be transformed into an equivalent form
\begin{align}
	\label{eq:mod_ssm1}
	\m{\widetilde{x}}_{k+1}&=\m{\widetilde{A}}\m{\widetilde{x}}_k+\m{\widetilde{g}}u_k\,, \\
	\label{eq:mod_ssm2}
	\m{y}_k&=\m{\widetilde{c}}^T\m{\widetilde{x}}_k\,,	
\end{align}  
where $\ve{\widetilde{x}}_k=\m{T}\m{x}_k$, $\m{\widetilde{g}}=\m{Tg}$, $\m{\widetilde{c}}=\m{Tc}$, and $\m{T}$ is an orthogonal matrix: 
\begin{equation}
\renewcommand{\arraystretch}{1.5}
	\m{T}=
	\begin{bmatrix}
    \m{V}^T & \m{0}\\
    \m{0} & \m{V}^T
\end{bmatrix}\,,
	\qquad
	\m{\widetilde{A}}=\m{TAT}^T=
	\begin{bmatrix}
    -\m{\Lambda}_1^{-1} \m{\Lambda}_2 & -\m{\Lambda}_1^{-1} \m{\Lambda}_3\\
    \m{I} & \m{0}
	\end{bmatrix}\,.
\label{eq:trans_ssm}
\end{equation}
The matrices $\m{A}$ and $\m{\widetilde{A}}$ share the same eigenvalues, which satisfy the following equation
\begin{equation}
\renewcommand{\arraystretch}{1.5}
	\m{\widetilde{A}}\m{\widetilde{w}}_k=
	\begin{bmatrix}
    -\m{\Lambda}_1^{-1} \m{\Lambda}_2 & -\m{\Lambda}_1^{-1} \m{\Lambda}_3\\
    \m{I} & \m{0}
	\end{bmatrix}
	\begin{bmatrix}
    \m{\widetilde{w}}_k^{(1)}\\ 
    \m{\widetilde{w}}_k^{(2)}
	\end{bmatrix}
=\lambda_k(\m{\widetilde{A}})\cdot
	\begin{bmatrix}
    \m{\widetilde{w}}_k^{(1)}\\ 
    \m{\widetilde{w}}_k^{(2)}
	\end{bmatrix}
=\lambda_k(\m{\widetilde{A}})\cdot \m{\widetilde{w}}_k\,,
\label{eq:eig_Atilde}
\end{equation}
where $\m{\widetilde{w}}_k\;(1\leq k \leq 2N_{\m{z}})$ denotes the eigenvectors of $\m{\widetilde{A}}$. Provided that $\m{A}$ is invertible, $\lambda_k(\m{\widetilde{A}})\neq 0\,(1\leq k \leq 2N_{\m{z}})$, thus $\m{\widetilde{w}}_k^{(2)}=\dfrac{1}{\lambda_k(\m{\widetilde{A}})}\m{\widetilde{w}}_k^{(1)}$ and by substitution we have 
\begin{equation*}
	-\m{\Lambda}_1^{-1} \m{\Lambda}_2\m{\widetilde{w}}_k^{(1)} - \frac{1}{\lambda_k(\m{\widetilde{A}})} \m{\Lambda}_1^{-1} \m{\Lambda}_3 \m{\widetilde{w}}_k^{(1)}=\lambda_k(\m{\widetilde{A}}) \cdot \m{\widetilde{w}}_k^{(1)}\,.
\end{equation*}
After rearranging and multiplying both sides by $\lambda_k(\m{\widetilde{A}})\cdot \m{\Lambda}_1$ we get the following matrix equation
\begin{equation}
	\m{\widetilde{\Lambda}}\m{\widetilde{w}}_k^{(1)}=\left(\lambda_k^2(\m{\widetilde{A}})\m{\Lambda}_1 + \lambda_k(\m{\widetilde{A}}) \m{\Lambda}_2 + \m{\Lambda}_3\right) \m{\widetilde{w}}_k^{(1)}=\m{0}\qquad (1\leq k \leq 2N_{\m{z}})\,.
	\label{eq:matrixeq}
\end{equation}
Note that $\m{\widetilde{\Lambda}}$ is a diagonal matrix, for which every diagonal element is a quadratic polynomial in $\lambda_k(\m{\widetilde{A}})$. Therefore, the eigenvalues are equal to the roots of these polynomials:
\begin{equation}
\lambda_{2k-1,2k}(\m{\widetilde{A}})=\frac{-\lambda_k(\m{\Lambda}_2) \pm \sqrt{\lambda_k^2(\m{\Lambda}_2) - 4\cdot \lambda_k(\m{\Lambda}_1) \lambda_k(\m{\Lambda}_3})}{2\lambda_k(\m{\Lambda}_1)} \qquad	(1\leq k \leq N_{\m{z}})\,.
\label{eq:quad_pol}
\end{equation} 
The statement of the Lemma follows from the similarity of the matrices $\m{A}$, $\m{\widetilde{A}}$ and $\m{M}_i$, $\m{\Lambda}_i\;(i=1,2,3)$.
\end{proof} 

\begin{lemma}
\label{lemma:eigvecs_A}
If the eigenvalues $\lambda_k(\m{\widetilde{A}})\; (k=1,\ldots,2N_{\m{z}})$ are pairwise distinct and non-zero then $\m{\widetilde{A}}$ in the transformed SSM \eqref{eq:mod_ssm1} can be diagonalized as follows
\begin{equation*}
	\m{\widetilde{A}}=\m{\widetilde{W}}\m{\Lambda}\m{\widetilde{W}}^{-1}=
	\begin{bmatrix*}[l]
    \phantom{.}\m{I} & \phantom{.}\m{I}\\ 
    \m{\Lambda}_{+}^{-1} &\m{\Lambda}_{-}^{-1}
	\end{bmatrix*}
	\begin{bmatrix*}[l]
    \m{\Lambda_{+}} & \m{0}\\ 
    \m{0} &\m{\Lambda_{-}}
	\end{bmatrix*}
		\begin{bmatrix}
       \m{\Lambda}_{+}\left(\m{\Lambda}_{+}-\m{\Lambda}_{-}\right)^{-1} & \m{\Lambda}_{+} \m{\Lambda}_{-} \left(\m{\Lambda}_{-}-\m{\Lambda}_{+}\right)^{-1}\\
       \m{\Lambda}_{-}\left(\m{\Lambda}_{-}-\m{\Lambda}_{+}\right)^{-1} & \m{\Lambda}_{+} \m{\Lambda}_{-} \left(\m{\Lambda}_{+}-\m{\Lambda}_{-}\right)^{-1}
	  \end{bmatrix}\,,	
		\end{equation*}
where the matrices $\m{\Lambda_{-}},\,\m{\Lambda_{+}}\in\IR^{N_{\m{z}} \times N_{\m{z}}}$ contain the first and the second roots of the quadratic polynomials in \eqref{eq:quad_pol}, namely $\m{\Lambda_{+}}=\diag\left(\lambda_1(\m{\widetilde{A}}),\lambda_3(\m{\widetilde{A}}),\ldots,\lambda_{2 N_{\m{z}} - 1}(\m{\widetilde{A}})\right)$ and $\m{\Lambda_{-}}=\diag\left(\lambda_2(\m{\widetilde{A}}),\lambda_4(\m{\widetilde{A}}),\ldots,\lambda_{2 N_{\m{z}}}(\m{\widetilde{A}})\right)$.
\end{lemma}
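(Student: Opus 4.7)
The plan is to build $\widetilde{\m{W}}$ explicitly from the eigenvector relation derived in the proof of Lemma~\ref{lemma:eig_A}, and then verify the block formula by direct computation that exploits the fact that all blocks involved are diagonal (hence commute).

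First, I would reuse the key identity from the proof of Lemma~\ref{lemma:eig_A}: any eigenvector $\widetilde{\m{w}}$ of $\widetilde{\m{A}}$ with eigenvalue $\lambda\neq 0$ must satisfy $\widetilde{\m{w}}^{(2)}=\lambda^{-1}\widetilde{\m{w}}^{(1)}$, and the top block $\widetilde{\m{w}}^{(1)}$ satisfies the diagonal matrix equation \eqref{eq:matrixeq}. Because $\widetilde{\m{\Lambda}}$ is diagonal with $k$-th diagonal entry equal to the quadratic polynomial whose roots are exactly $\lambda_{2k-1}(\widetilde{\m{A}})$ and $\lambda_{2k}(\widetilde{\m{A}})$, I can choose $\widetilde{\m{w}}^{(1)}=\m{e}_k$ (standard basis vector) for each of the two roots. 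This produces $2N_{\m{z}}$ eigenvectors; the non-zero assumption lets me form the $\lambda^{-1}$ lower block, and the pairwise distinctness guarantees linear independence, so $\widetilde{\m{A}}$ is diagonalizable.

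Next, I would order the eigenvectors by placing all ``plus'' eigenvectors first and all ``minus'' eigenvectors second. With this ordering, the eigenvector matrix immediately assembles into the block form
\begin{equation*}
\widetilde{\m{W}}=\begin{bmatrix} \m{I} & \m{I} \\ \m{\Lambda}_+^{-1} & \m{\Lambda}_-^{-1} \end{bmatrix},\qquad \m{\Lambda}=\begin{bmatrix} \m{\Lambda}_+ & \m{0} \\ \m{0} & \m{\Lambda}_- \end{bmatrix},
\end{equation*}
and $\widetilde{\m{A}}\widetilde{\m{W}}=\widetilde{\m{W}}\m{\Lambda}$ holds by construction. The remaining task is to compute $\widetilde{\m{W}}^{-1}$ and show it matches the claimed right-most block matrix.

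For this last step I would exploit the fact that $\m{\Lambda}_+,\m{\Lambda}_-,\m{\Lambda}_+-\m{\Lambda}_-$ are all diagonal (hence mutually commuting and, by the distinctness assumption together with non-vanishing eigenvalues, invertible). Under commutativity the $2\times 2$ block inverse reduces to the scalar cofactor formula: with ``determinant'' $\m{\Lambda}_-^{-1}-\m{\Lambda}_+^{-1}=(\m{\Lambda}_+-\m{\Lambda}_-)(\m{\Lambda}_+\m{\Lambda}_-)^{-1}$, I would write
\begin{equation*}
\widetilde{\m{W}}^{-1}=\m{\Lambda}_+\m{\Lambda}_-(\m{\Lambda}_+-\m{\Lambda}_-)^{-1}\begin{bmatrix} \m{\Lambda}_-^{-1} & -\m{I} \\ -\m{\Lambda}_+^{-1} & \m{I} \end{bmatrix},
\end{equation*}
and multiply each block through, obtaining $\m{\Lambda}_+(\m{\Lambda}_+-\m{\Lambda}_-)^{-1}$, $\m{\Lambda}_+\m{\Lambda}_-(\m{\Lambda}_--\m{\Lambda}_+)^{-1}$, $\m{\Lambda}_-(\m{\Lambda}_--\m{\Lambda}_+)^{-1}$, $\m{\Lambda}_+\m{\Lambda}_-(\m{\Lambda}_+-\m{\Lambda}_-)^{-1}$ in the four positions, which is precisely the claimed matrix. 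A sanity check via $\widetilde{\m{W}}\widetilde{\m{W}}^{-1}=\m{I}$ closes the argument. The only real subtlety, and the main ``obstacle,'' is tracking signs and making the commutativity argument explicit so that the scalar-style $2\times 2$ inverse formula is unambiguously applied to blocks; once the commutativity is recorded, the verification is mechanical.
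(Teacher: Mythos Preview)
Your proof is correct and arrives at the same factorization, but the route differs from the paper's in two places. For the relation $\widetilde{\m{A}}\widetilde{\m{W}}=\widetilde{\m{W}}\m{\Lambda}$, the paper does not go back to the eigenvector equation of Lemma~\ref{lemma:eig_A}; instead it applies Vi\`eta's formulas to the quadratics in \eqref{eq:quad_pol} to rewrite the blocks of $\widetilde{\m{A}}$ as $-\m{\Lambda}_1^{-1}\m{\Lambda}_2=\m{\Lambda}_{+}+\m{\Lambda}_{-}$ and $-\m{\Lambda}_1^{-1}\m{\Lambda}_3=-\m{\Lambda}_{+}\m{\Lambda}_{-}$, and then checks the identity by a single block multiplication. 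For the inverse, the paper simply asserts that the stated $\widetilde{\m{W}}^{-1}$ can be verified via $\widetilde{\m{W}}\widetilde{\m{W}}^{-1}=\m{I}$, whereas you actually \emph{derive} it using the scalar $2\times 2$ cofactor formula, justified by commutativity of the diagonal blocks. Your approach is more constructive and explains where the formula for $\widetilde{\m{W}}^{-1}$ comes from; the paper's is shorter and purely verificational. Either argument is complete once the distinctness hypothesis is invoked to guarantee invertibility of $\widetilde{\m{W}}$ and of $\m{\Lambda}_{+}-\m{\Lambda}_{-}$.
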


\begin{proof} First, we prove that $\m{\widetilde{A}}\m{\widetilde{W}}=\m{\widetilde{W}} \m{\Lambda}$. By applying Viet\`a's formulas to the quadratic polynomials in \eqref{eq:quad_pol} we have that $-\m{\Lambda}_1^{-1} \m{\Lambda}_2=\m{\Lambda_{-}} + \m{\Lambda_{+}}$ and $-\m{\Lambda}_1^{-1} \m{\Lambda}_3=-\m{\Lambda_{-}} \cdot \m{\Lambda_{+}}$, therefore
\begin{equation}
	\m{\widetilde{A}}\m{\widetilde{W}}=
	\begin{bmatrix}
    \m{\Lambda_{-}} + \m{\Lambda_{+}} & -\m{\Lambda_{-}} \cdot \m{\Lambda_{+}}\\
    \m{I} & \m{0}
	\end{bmatrix}
	\begin{bmatrix*}[l]
    \phantom{.}\m{I} & \phantom{.}\m{I}\\ 
    \m{\Lambda}_{+}^{-1} &\m{\Lambda}_{-}^{-1}
	\end{bmatrix*}
	=
	\begin{bmatrix}
    \m{\Lambda_{+}}& \m{\Lambda_{-}}\\
    \m{I} & \m{I}
	\end{bmatrix}
	=\m{\widetilde{W}}\m{\Lambda}
	\,.
\label{eq:eig_eq1}
\end{equation}
If $\m{\widetilde{A}}$ has pairwise distinct eigenvalues then the corresponding eigenvectors, i.e. the columns of $\m{\widetilde{W}}$ are linearly independent, thus it is invertible. The validity of the explicit formula for $\m{\widetilde{W}}^{-1}$ can be verified via simple matrix multiplication: $\m{\widetilde{W}} \m{\widetilde{W}}^{-1}=\m{I}$.
   
\end{proof}

\begin{theorem}
\label{theorem:stability}
The state space model in \eqref{equ:RECENDT_002} is asymptotically stable if and only if 
\begin{equation}
	2\cos\left(\frac{k\pi}{N_{\m{z}}+1}\right)-2>-\frac{4\Delta_z^2}{c_0^2 \Delta_t^2}\qquad (1\leq k \leq N_{\m{z}})\,.
\label{eq:stability}
\end{equation}
\end{theorem}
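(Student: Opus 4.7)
The approach is to reduce the stability question to root localization of $N_{\m{z}}$ explicit real quadratics, and then apply the Schur--Cohn (Jury) criterion to each one. The payoff is that only one of the three Jury conditions turns out to be binding, and it collapses to the stated trigonometric inequality.

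First, recall that the LTI system \eqref{equ:RECENDT_002}--\eqref{equ:RECENDT_003} is asymptotically stable iff every eigenvalue of $\m{A}$ satisfies $|\lambda(\m{A})|<1$. The diagonalization used in the proof of Lemma~\ref{lemma:eig_A} shows that the characteristic polynomial of $\m{A}$ factors as a product of $N_{\m{z}}$ real quadratics
\[
\lambda_k(\m{M}_1)\,\lambda^2+\lambda_k(\m{M}_2)\,\lambda+\lambda_k(\m{M}_3)=0,\qquad 1\le k\le N_{\m{z}},
\]
so it is enough to check that the two roots of each such quadratic lie strictly inside the unit circle. Working with the quadratics directly, rather than with the explicit root formula \eqref{eq:eig_A}, avoids the invertibility hypothesis in Lemma~\ref{lemma:eig_A}, since Schur--Cohn applies verbatim whether or not $\m{A}$ is invertible (and a root equal to $0$ is trivially inside the unit disk anyway). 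Substituting Lemma~\ref{lemma:eig_Mi} and writing $\delta_k:=-\lambda_k(\m{D})>0$, a global sign flip (permissible because $\lambda_k(\m{M}_1)<0$ by Corollary~\ref{cor:negdef}) produces the equivalent polynomial
\[
q_k(\lambda)=(\gamma\delta_k+\alpha)\,\lambda^2+(\delta_k-2\alpha)\,\lambda+(\alpha-\gamma\delta_k)
\]
with strictly positive leading coefficient.

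Next, apply the Schur--Cohn criterion: for $A\lambda^2+B\lambda+C$ with $A>0$, both roots lie in the open unit disk iff $|C|<A$, $q_k(1)>0$, and $q_k(-1)>0$. A short calculation shows that $|\alpha-\gamma\delta_k|<\alpha+\gamma\delta_k$ holds automatically for all $\alpha,\gamma,\delta_k>0$, and that $q_k(1)=\delta_k>0$ is also automatic. Hence the only binding condition is
\[
q_k(-1)=4\alpha-\delta_k>0\quad\Longleftrightarrow\quad \lambda_k(\m{D})>-4\alpha,\qquad 1\le k\le N_{\m{z}}.
\]

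Finally, substituting $\lambda_k(\m{D})=-\bigl(2-2\cos(k\pi/(N_{\m{z}}+1))\bigr)/\Delta_z^2$ from \eqref{eq:eig_D} and $\alpha=1/(c_0^2\Delta_t^2)$, and multiplying through by $\Delta_z^2>0$, turns $\lambda_k(\m{D})>-4\alpha$ into exactly \eqref{eq:stability}. The main technical step is the careful sign-tracking around $\lambda_k(\m{M}_1)<0$ together with the Schur--Cohn check; the rest is routine algebra. One conceptually noteworthy byproduct is that $\gamma$, and hence the relaxation time $\tau$, drops out of the stability threshold altogether, so acoustic attenuation neither tightens nor loosens the classical CFL-type condition inherent to this finite-difference scheme -- a remark worth including with the proof.
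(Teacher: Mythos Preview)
Your proof is correct and follows essentially the same approach as the paper: both reduce stability of $\m{A}$ to the Schur--Cohn (Jury) test on the $N_{\m{z}}$ real quadratics $\lambda_k(\m{M}_1)\lambda^2+\lambda_k(\m{M}_2)\lambda+\lambda_k(\m{M}_3)$ arising from the simultaneous diagonalization in Lemma~\ref{lemma:eig_Mi}, and both find that the only binding condition is $\lambda_k(\m{D})>-4\alpha$. The paper carries out the two-step recursive Schur--Cohn reduction via Rouch\'e's theorem, whereas you apply the equivalent direct Jury form $|C|<A$, $q_k(1)>0$, $q_k(-1)>0$; your remark that invertibility of $\m{A}$ is not needed and that $\tau$ drops out of the stability threshold are valid and worth keeping.
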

\begin{proof}
In order to prove the stability of a discrete time state space model, one should show that the eigenvalues $\lambda_k(\m{\widetilde{A}})$ lie inside the unit disc. To this end, we apply two steps of the well-known Schur-Cohn algorithm (see Section~6.8 in \cite{henrici}), which provides an equivalent criteria for testing asymptotic stability.

\textbf{Step 1.} Let us consider the diagonal elements of the polynomial matrix in \eqref{eq:matrixeq}, which are quadratic polynomials of the form
\begin{align*}
	P_1(\lambda)=\lambda_k(\m{\Lambda}_1)\lambda^2 +\lambda_k(\m{\Lambda}_2)\lambda + \lambda_k(\m{\Lambda}_3)\,, \quad P_1^*(\lambda)=\lambda_k(\m{\Lambda}_3)\lambda^2 +\lambda_k(\m{\Lambda}_2)\lambda + \lambda_k(\m{\Lambda}_1)\,.
\label{eq:pols_step1}
\end{align*}
for $1\leq k \leq N_{\m{z}}$. By Viet\`a's formulas, the condition $\left|\lambda_k(\m{\Lambda}_3)\right|<\left|\lambda_k(\m{\Lambda}_1)\right|$ should be satisfied, otherwise there is at least one eigenvalue that lies outside the open unit disc. In the notations of \eqref{eq:eig_Mi} and under the assumptions of Corollary~\ref{cor:negdef}, the inequality can be written as
\begin{equation*}
	\left|-\gamma \lambda_k(\m{D}) - \alpha\right|=\left|\lambda_k(\m{\Lambda}_3)\right|<\left|\lambda_k(\m{\Lambda}_1)\right|=\left|\gamma \lambda_k(\m{D}) - \alpha\right|= -\gamma \lambda_k(\m{D}) + \alpha\,.
\end{equation*}
According to the sign on the left hand side, we have the following two cases:  	
	\begin{align*}
		-\gamma \lambda_k(\m{D}) - \alpha &< -\gamma \lambda_k(\m{D}) + \alpha  \quad \Leftrightarrow \quad  0<2\alpha\,,\\
  	 \phantom{-}\gamma \lambda_k(\m{D}) + \alpha &< -\gamma \lambda_k(\m{D}) + \alpha  \quad \Leftrightarrow \quad  0<-2\gamma \lambda_k(\m{D})\,.
	\end{align*}
Since $0<\alpha,\gamma$ and $\lambda_k(\m{D})<0\;(1\leq k \leq N_{\m{z}})$, the condition $\left|\lambda_k(\m{\Lambda}_3)\right|<\left|\lambda_k(\m{\Lambda}_1)\right|$ is satisfied. Now we can proceed by applying the Rouch\`e's theorem on 
\begin{equation}
	\lambda_k(\m{\Lambda}_1)P_1(\lambda)-\lambda_k(\m{\Lambda}_3)P_1^{*}(\lambda)=\left(\lambda^2_k(\m{\Lambda}_1)-\lambda^2_k(\m{\Lambda}_3)\right)\lambda^2+\left(\lambda_k(\m{\Lambda}_1)-\lambda_k(\m{\Lambda}_3)\right) \lambda_k(\m{\Lambda}_2)\lambda\,,
	\label{eq:step1:end}
\end{equation}
which has as many zeros inside the unit disc as $P_1$.

\textbf{Step 2.}
We can simplify \eqref{eq:step1:end} by $\lambda_k(\m{\Lambda}_1)-\lambda_k(\m{\Lambda}_3)\neq0$, then we define the linear polynomials
\begin{align*}
	P_2(\lambda)=\left(\lambda_k(\m{\Lambda}_1)+\lambda_k(\m{\Lambda}_3)\right)\lambda + \lambda_k(\m{\Lambda}_2)\,, \quad P_2^*(\lambda)=\lambda_k(\m{\Lambda}_2) \lambda + \left(\lambda_k(\m{\Lambda}_1)+\lambda_k(\m{\Lambda}_3)\right)\,.
\end{align*}
Again by Viet\`a's formulas, the condition $\left|\lambda_k(\m{\Lambda}_2)\right|<\left|\lambda_k(\m{\Lambda}_1)+\lambda_k(\m{\Lambda}_3)\right|=2\alpha$ should be satisfied. Considering the signs of the left hand side we have
	\begin{align}
		\phantom{-}\lambda_k(\m{D}) + 2\alpha &< 2 \alpha  \quad \Leftrightarrow \quad  \lambda_k(\m{D})<0\,,\\
		\label{eq:step2_last}
		-\lambda_k(\m{D}) - 2\alpha &< 2 \alpha  \quad \Leftrightarrow \quad  \lambda_k(\m{D})>-4\alpha\,.
	\end{align}   
The first inequality is true for all $1\leq k \leq N_{\m{z}}$, hence the eigenvalues of the state matrix lie inside the unit disc if and only if the second condition is satisfied. The statement of the theorem follows by substituting back the definitions of $\lambda_k(\m{D})$ and $\alpha$ into \eqref{eq:step2_last}. 
\end{proof}

In order to demonstrate the results we displayed the eigenvalues of the matrices $\m{D}$ and $\m{A}$ in Fig.~\ref{fig:stability}. One can see that the state matrix has five unstable modes due to the five eigenvalues $\lambda_k(\m{D})$ that violate the condition in \eqref{eq:step2_last}. The other eigenvalues of $\m{A}$ are close to the unit circle, but their absolute values are still less than one. Note that for LTI systems, asymptotic stability is the strongest type of stability, which implies others like Lyapunov stability, bounded input bounded output (BIBO), and bounded input bounded state (BIBS) stability (see Chapter 7. in \cite{bay}).

\begin{figure}[!t]
\centering
  \subfigure[Eigenvalues of $\m{D}$.]{
  \includegraphics[scale=0.58, trim=150 230 160 230, clip]{./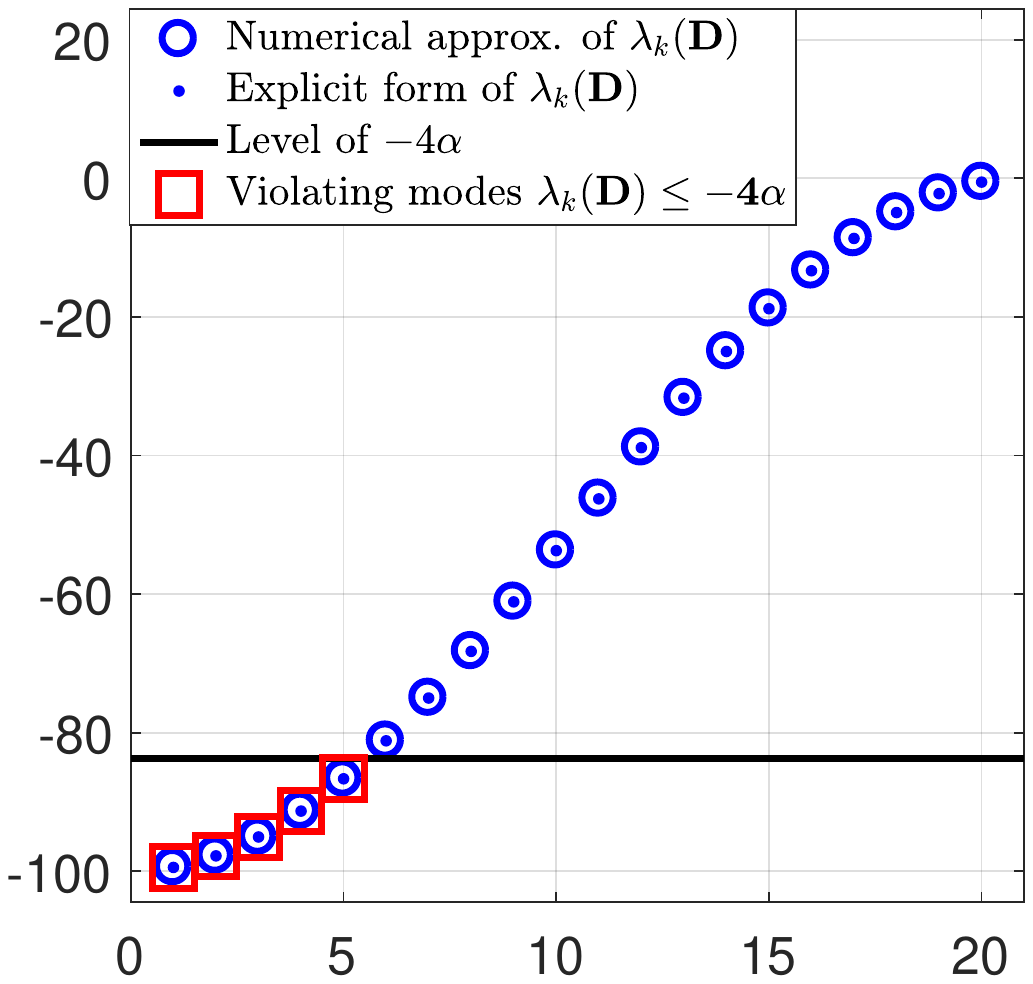}
  \label{fig:eigs_D}
  } \hspace{8mm}
  \subfigure[Eigenvalues of $\m{A}$.]{
  \includegraphics[scale=0.58, trim=150 230 130 230, clip]{./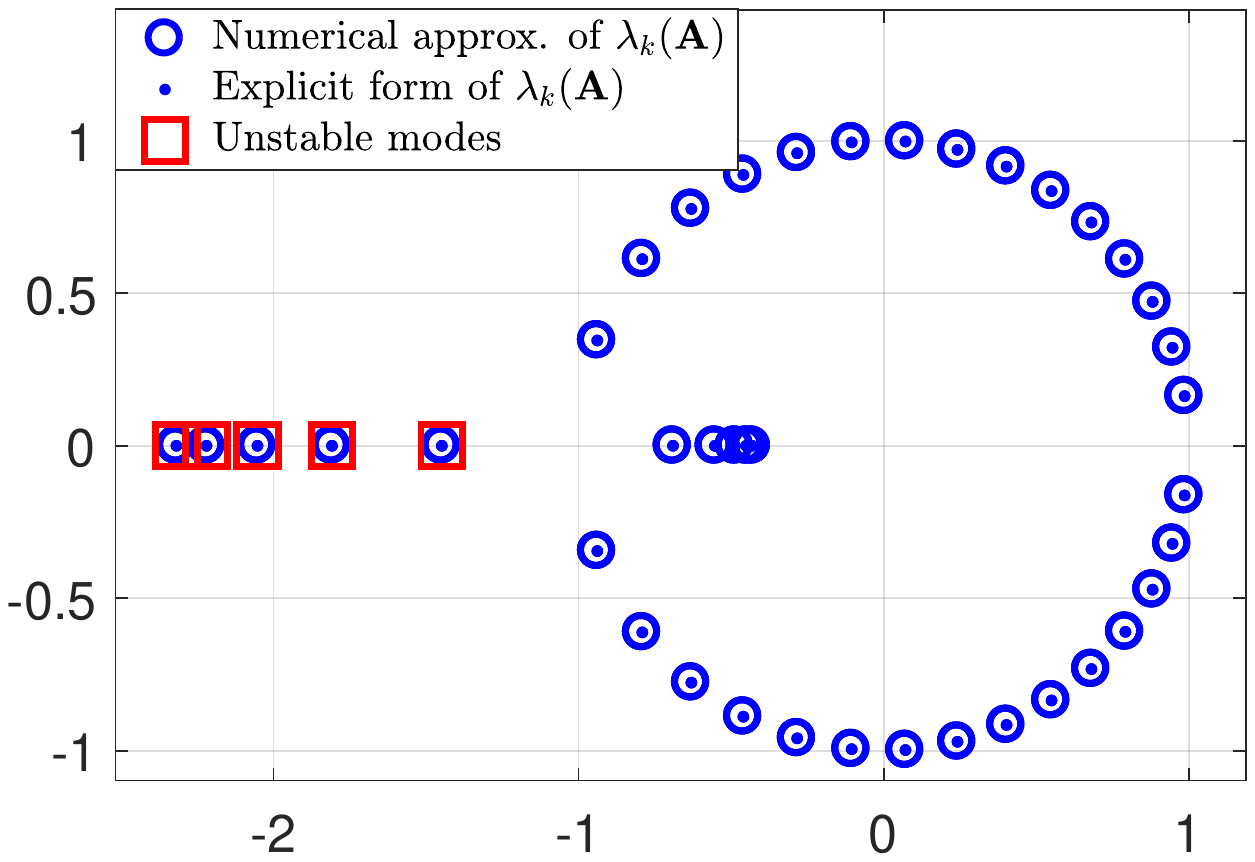}
  \label{fig:eigs_A}
  }
\caption{Demonstrating unstable modes of the discrete SSM.}
\label{fig:stability}
\end{figure}

\begin{theorem}
\label{theorem:observability}
If we choose the parameters $c_0,\tau,\Delta_t,\Delta_z\in\IR_+$ such that the eigenvalues of $\m{A}$ are pairwise distinct and non-zero then the state space model in \eqref{equ:RECENDT_002} is observable. 
\end{theorem}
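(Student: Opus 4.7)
I would verify observability via the Kalman rank criterion: the pair $(\m{A}, \ve{c}^T)$ is observable iff the observability matrix
\[
\mathcal{O} \;=\; \bigl[\ve{c},\,\m{A}^T\ve{c},\,\ldots,\,(\m{A}^T)^{2N_\ve{z}-1}\ve{c}\bigr]^T
\]
has full rank $2N_\ve{z}$. Under the hypothesis that the eigenvalues of $\m{A}$ are pairwise distinct and non-zero, Lemma~\ref{lemma:eigvecs_A} supplies an explicit eigendecomposition $\m{A} = \m{W}\m{\Lambda}\m{W}^{-1}$, and the natural attack is to transform $\mathcal{O}$ into a product of a Vandermonde and a diagonal matrix, the classical pattern for single-output LTI systems.

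Concretely, post-multiplying by $\m{W}$ gives
\[
\mathcal{O}\,\m{W} \;=\; \m{V}_\lambda \cdot \diag(\alpha_1,\ldots,\alpha_{2N_\ve{z}}),
\]
where $(\m{V}_\lambda)_{ij} = \lambda_j^{i-1}(\m{A})$ is a Vandermonde matrix and $\alpha_j$ denotes the $j$-th entry of the row vector $\ve{c}^T\m{W}$. The Vandermonde factor is invertible because the eigenvalues are pairwise distinct, so observability reduces to showing $\alpha_j \neq 0$ for every $j$, which is exactly the Popov--Belevitch--Hautus eigenvector criterion.

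To evaluate the $\alpha_j$, I would trace the eigenvectors back through the similarity $\m{A} = \m{T}^T\widetilde{\m{A}}\m{T}$ from the proof of Lemma~\ref{lemma:eig_A}, where $\m{T} = \diag(\m{V}^T,\m{V}^T)$. By Lemma~\ref{lemma:eigvecs_A}, every eigenvector of $\widetilde{\m{A}}$ has the form $[\m{e}_k^T,\, \lambda^{-1}\m{e}_k^T]^T$ for some standard basis vector $\m{e}_k$ and associated eigenvalue $\lambda\in\{\lambda_{2k-1}(\m{A}),\lambda_{2k}(\m{A})\}$. Applying $\m{T}^T$ yields the eigenvector $[\m{v}_k^T,\, \lambda^{-1}\m{v}_k^T]^T$ of $\m{A}$, with $\m{v}_k$ the $k$-th eigenvector of $\m{D}$ from \eqref{eq:eigv_D}. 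Since $\ve{c}^T$ extracts the first coordinate of the state vector, every $\alpha_j$ equals
\[
\m{v}_k[1] \;=\; \sqrt{\tfrac{2}{N_\ve{z}+1}}\,\sin\!\left(\tfrac{k\pi}{N_\ve{z}+1}\right),
\]
which is strictly positive for each $k \in \{1,\ldots,N_\ve{z}\}$ because the argument lies in $(0,\pi)$. Hence no $\alpha_j$ vanishes and $\mathcal{O}$ has full rank.

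The main obstacle is neither the Vandermonde algebra nor the sine-positivity step, but rather the careful bookkeeping that threads the eigenvectors of $\widetilde{\m{A}}$ through the orthogonal change of basis $\m{T}$ back into the original coordinates and confirms that the relevant first component is precisely $\m{v}_k[1]$. The non-zeroness hypothesis on the eigenvalues is consumed exactly at this point, since Lemma~\ref{lemma:eigvecs_A} already requires invertibility of $\m{A}$ to obtain the eigenvector structure on which the whole argument rests.
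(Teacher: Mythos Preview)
Your proof is correct and essentially matches the paper's argument: both reduce observability to the Popov--Belevitch--Hautus eigenvector test and verify that $\ve{c}^T$ applied to each right eigenvector of $\m{A}$ yields $\m{v}_k[1]=\sqrt{2/(N_\ve{z}+1)}\,\sin\bigl(k\pi/(N_\ve{z}+1)\bigr)\neq 0$, with the eigenvector bookkeeping through the similarity $\m{A}=\m{T}^T\widetilde{\m{A}}\m{T}$ carried out identically. The only difference is cosmetic---you open with the Kalman rank criterion and the Vandermonde factorization before noting that it collapses to PBH, whereas the paper invokes PBH directly.
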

\begin{proof}
First, we prove the observability by applying the so-called Popov--Belevitch--Hautus (PBH) test. To this end, we give the explicit formulae  for the left and right eigenvectors of $\m{A}$. Lemma~\ref{lemma:eigvecs_A} and \eqref{eq:trans_ssm} implies that $\m{A}= \m{T}^T \m{\widetilde{W}} \m{\Lambda} \m{\widetilde{W}}^{-1} \m{T}=\m{Q}\m{\Lambda}\m{Q}^{-1}$, where $\m{Q}$ is as follows: 
\begin{equation}
	\m{Q}=	\m{T}^T \m{\widetilde{W}} =
	\begin{bmatrix}
    \m{V} & \m{V}\\
    \m{V}\m{\Lambda}_{+}^{-1} & \m{V}\m{\Lambda}_{-}^{-1}
	\end{bmatrix}\,.
\label{eq:right_eigv}
\end{equation}
According to the PBH condition, the SSM in \eqref{equ:RECENDT_002} is observable if and only if $\m{c}^T \m{q}_k \neq 0$ for all $k=1,\ldots,2N_{\m{z}}$, i.e. none of the right eigenvectors $\m{q}_k$ of $\m{A}$ are in the nullspace of $\m{c}^T$. Recalling that $\m{c}^T=\left[1,0,\ldots,0\right]^T$ and using the definition of $\m{v}_k$ in \eqref{eq:eigv_D}, we have  
\begin{equation}
	\m{c}^T \m{q}_k=\m{v}_k\left[1\right]=\sqrt{\frac{2}{N_{\m{z}}+1}} \sin \left( \frac{k\pi}{N_{\m{z}}+1} \right)\qquad (k=1,\ldots,N_{\m{z}})\,.
\label{eq:PHB_obs}
\end{equation}
Note that the dot product $\m{c}^T \m{q}_k$ gives the first coordinate of the eigenvector $\m{v}_k$ of $\m{D}$, which is never equal to zero. Therefore, the PHB condition is satisfied and the SSM is observable.
\end{proof}

\begin{theorem}
\label{theorem:controllability}
If we choose the parameters $c_0,\tau,\Delta_t,\Delta_z\in\IR_+$ such that the eigenvalues of $\m{A}$ are pairwise distinct and non-zero then the state space model in \eqref{equ:RECENDT_002} is controllable if and only if there exists no $s\in\IR_+$ such that 
\begin{equation}
	\ve{d}[j] = s\cdot \sin \left( \frac{j\pi}{N_{\m{z}}+1} \right)\qquad (j=1,\ldots,N_{\m{z}})\,,	
\label{eq:controlab_cond}
\end{equation}
where $\ve{d}$ is generated by the corresponding absorption profile $\bm{\mu}$ in \eqref{equ:RECENDT_014}. 
\end{theorem}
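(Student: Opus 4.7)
The plan is to apply the Popov--Belevitch--Hautus (PBH) controllability test in direct parallel with the proof of Theorem~\ref{theorem:observability}. Under the hypothesis that $\mathbf{A}$ has pairwise distinct and nonzero eigenvalues, the pair $(\mathbf{A},\mathbf{g})$ is controllable if and only if $\mathbf{p}_k^T\mathbf{g}\neq 0$ for every left eigenvector $\mathbf{p}_k^T$ of $\mathbf{A}$. So I would assemble those left eigenvectors, dot them against $\mathbf{g}$, and translate the resulting algebraic condition into a condition on $\mathbf{d}$.

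For the eigenvectors I would combine the orthogonal similarity $\mathbf{A}=\mathbf{T}^T\widetilde{\mathbf{A}}\mathbf{T}$ from~\eqref{eq:trans_ssm} with Lemma~\ref{lemma:eigvecs_A} to write $\mathbf{A}=\mathbf{Q}\mathbf{\Lambda}\mathbf{Q}^{-1}$ with $\mathbf{Q}=\mathbf{T}^T\widetilde{\mathbf{W}}$ (as in~\eqref{eq:right_eigv}) and $\mathbf{Q}^{-1}=\widetilde{\mathbf{W}}^{-1}\mathbf{T}$. Because each of the four blocks of $\widetilde{\mathbf{W}}^{-1}$ is diagonal and $\mathbf{T}$ is block-diagonal in $\mathbf{V}^T$, the rows of $\mathbf{Q}^{-1}$ -- that is, the left eigenvectors of $\mathbf{A}$ -- come in pairs of the form $[\alpha_k\mathbf{v}_k^T,\,\beta_k\mathbf{v}_k^T]$, one pair for each quadratic in~\eqref{eq:quad_pol}, where the $\mathbf{v}_k$ are the discrete sine modes from~\eqref{eq:eigv_D} and the scalars $\alpha_k,\beta_k$ are nonzero because the two roots of the quadratic are distinct and nonzero.

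For the dot product I would use $\mathbf{g}=[\mathbf{f}^T,\mathbf{0}^T]^T$ with $\mathbf{f}=\mathbf{M}_1^{-1}\mathbf{b}=-\tfrac{\beta\chi}{C_p\Delta_t}\mathbf{M}_1^{-1}\mathbf{d}$. Only the first half of each left eigenvector couples to $\mathbf{g}$, so $\mathbf{p}_k^T\mathbf{g}$ becomes a nonzero scalar times $\mathbf{v}_k^T\mathbf{M}_1^{-1}\mathbf{d}$. Invoking the spectral decomposition $\mathbf{M}_1=\mathbf{V}\mathbf{\Lambda}_1\mathbf{V}^T$ from the proof of Lemma~\ref{lemma:eig_A} and Corollary~\ref{cor:negdef} (which guarantees $\lambda_k(\mathbf{M}_1)\neq 0$), the expression simplifies to a nonzero multiple of $\mathbf{v}_k^T\mathbf{d}$. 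Hence the PBH criterion collapses to the single algebraic condition $\mathbf{v}_k^T\mathbf{d}\neq 0$ for every $k=1,\ldots,N_{\mathbf{z}}$.

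The final step is to recognize the stated form. By~\eqref{eq:eigv_D}, $\mathbf{v}_k^T\mathbf{d}=0$ means that $\mathbf{d}$ is orthogonal to the $k$-th discrete sine mode $[\sin(kj\pi/(N_{\mathbf{z}}+1))]_{j=1}^{N_{\mathbf{z}}}$. The stated failure mode $\mathbf{d}[j]=s\sin(j\pi/(N_{\mathbf{z}}+1))$ with $s>0$ -- i.e.\ $\mathbf{d}$ aligned with $\mathbf{v}_1$ -- is an obvious sufficient cause of non-controllability, since then $\mathbf{v}_k^T\mathbf{d}=0$ for all $k\geq 2$ by orthogonality of the sine basis. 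The main obstacle I expect is the converse: showing that, together with the strict positivity and multiplicative structure of the entries of $\mathbf{d}$ in~\eqref{equ:RECENDT_012a}, alignment with $\mathbf{v}_1$ is the only admissible way to realize the orthogonality $\mathbf{v}_k^T\mathbf{d}=0$, so that the PBH criterion and the stated sine condition coincide.
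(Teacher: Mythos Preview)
Your plan is exactly the paper's: apply the PBH test to the left eigenvectors via $\mathbf{Q}^{-1}=\widetilde{\mathbf{W}}^{-1}\mathbf{T}$, use the block structure of $\mathbf{g}=[\mathbf{f}^T,\mathbf{0}^T]^T$ together with $\mathbf{M}_1^{-1}=\mathbf{V}\mathbf{\Lambda}_1^{-1}\mathbf{V}^T$, and collapse the criterion to $\mathbf{v}_k^T\mathbf{d}\neq 0$ for every $k$. The paper carries this reduction out in the same way and for the same reasons, so up to the point where you stop, the two arguments are identical.

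For the converse you flag as the obstacle, the paper proceeds in two steps: it first asserts that orthogonality of $\mathbf{V}$ forces any $\mathbf{d}$ with $\mathbf{v}_k^T\mathbf{d}=0$ to be a scalar multiple of a \emph{single} $\mathbf{v}_i$ with $i\neq k$; it then uses the sign pattern of the eigenvectors ($\mathbf{v}_1$ is the only strictly positive one) together with entrywise positivity of $\mathbf{d}$ to conclude $i=1$ and $s>0$. Your caution here is well placed, because the first step does not follow from the orthogonality of $\mathbf{V}$ alone---that only gives $\mathbf{d}\in\operatorname{span}\{\mathbf{v}_i:i\neq k\}$---and positivity of $\mathbf{d}$ does not close the gap. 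For instance, with $N_{\mathbf{z}}=3$ the vector $\mathbf{d}=\mathbf{v}_1+\varepsilon\mathbf{v}_2$ is strictly positive for small $\varepsilon>0$, satisfies $\mathbf{v}_3^T\mathbf{d}=0$ (so PBH fails), yet is not a multiple of $\mathbf{v}_1$; moreover the recursion~\eqref{eq:mu_recursion} recovers a valid positive $\bm{\mu}$ from this $\mathbf{d}$, so the multiplicative structure of~\eqref{equ:RECENDT_012a} does not exclude it. In short, the converse you anticipated as the main difficulty is genuinely open in the paper's argument as well.
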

\begin{proof}
In order to analyze the controllability, we consider the matrix $\m{Q}^{-1}$ that contains the left eigenvectors of $\m{A}$ in its rows:  
\begin{equation}
	\m{Q}^{-1}=\m{\widetilde{W}}^{-1} \m{T} = \m{S} \m{P}=
	\begin{bmatrix}
     \left(\m{\Lambda_{+}}-\m{\Lambda_{-}}\right)^{-1} & \m{0}\\
		 \m{0} & \left(\m{\Lambda_{+}}-\m{\Lambda_{-}}\right)^{-1}
	\end{bmatrix}	
	\begin{bmatrix}
    \phantom{-}\m{\Lambda_{+}}\m{V}^T & -\m{\Lambda_{+}}\m{\Lambda_{-}}\m{V}^T\\
    -\m{\Lambda_{-}}\m{V}^T & \phantom{-}\m{\Lambda_{+}}\m{\Lambda_{-}}\m{V}^T
	\end{bmatrix}.	
\label{eq:left_eigv_contr}
\end{equation}
For the sake of simplicity, we will apply the PHB test on the rows of $\m{P}$ only, since multiplying by $\m{S}$ is just a scaling of the eigenvectors. Namely, we should verify that $\m{P}\ve{g}\neq \ve{0}$, where $\ve{g}$ is defined in \eqref{equ:RECENDT_019} and the matrix vector product is the following:
\begin{equation}
	\m{P}\m{g}=
	- \frac{\beta \chi}{C_p \Delta_t}\cdot 
	\begin{bmatrix}
    \phantom{-}\m{\Lambda_{+}}\m{V}^T & -\m{\Lambda_{+}}\m{\Lambda_{-}}\m{V}^T\\
    -\m{\Lambda_{-}}\m{V}^T & \phantom{-}\m{\Lambda_{+}}\m{\Lambda_{-}}\m{V}^T
	\end{bmatrix}
  \begin{bmatrix} 
	\m{V} \m{\Lambda}_1^{-1}\m{V}^T \ve{d} \\ 
	\m{0} 
	\end{bmatrix}=
	- \frac{\beta \chi}{C_p \Delta_t}\cdot
  \begin{bmatrix} 
	\phantom{-}\m{\Lambda_+} \m{\Lambda}_1^{-1}\m{V}^T \ve{d} \\ 
	-\m{\Lambda_-} \m{\Lambda}_1^{-1}\m{V}^T \ve{d}
	\end{bmatrix}
	\,.	
\label{eq:left_eigv}
\end{equation}
Since the diagonal elements of $\m{\Lambda_+},\,\m{\Lambda_-},\,\m{\Lambda}_1$ are non-zero, the $k$th coordinate of the product $\m{P}\m{g}$ will be equal to zero if and only if $\ve{v}_k^T \ve{d}=0$. The orthogonality of $\m{V}$ implies that this condition can be satisfied if and only if $\exists i\neq k$ : $\ve{d}=s \cdot \ve{v}_i\;(s\in\IR\setminus\left\{0\right\})$. Recall that $\ve{v}_i\,(1\leq i \leq N_{\m{z}})$ always has negative coordinates except for $i=1$, when $\ve{v}_1[j]>0\;(1 \leq j \leq N_{\m{z}})$. Due to this fact and to the non-negativity of the absorption profile, $\ve{v}_i^T \ve{d}=0$ if and only if $\ve{d}=s \cdot \ve{v}_1\,(s\in\IR_+)$, which along with \eqref{eq:eigv_D} prove our statement.    
\end{proof}

\begin{corollary}
\label{cor:minrel}
Under the assumptions of Theorem~\ref{theorem:controllability} the state space realization in \eqref{equ:RECENDT_019} is minimal, i.e. the representation is unique up to a similarity transform. 
\end{corollary}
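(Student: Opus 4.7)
The plan is to deduce minimality from the two preceding theorems by invoking a standard result in linear systems theory: a finite-dimensional LTI realization $(\m{A},\ve{g},\ve{c}^T)$ is minimal if and only if it is both controllable and observable, and any two minimal realizations of the same transfer function are related by a (unique) similarity transform. References such as Kailath or Chapter~7 of Bay (already cited by the paper) can be used to justify this equivalence.

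First I would note that the assumptions of Theorem~\ref{theorem:controllability} include those of Theorem~\ref{theorem:observability}, since both require $c_0,\tau,\Delta_t,\Delta_z\in\IR_+$ with the eigenvalues of $\m{A}$ pairwise distinct and non-zero. Hence, under these hypotheses, Theorem~\ref{theorem:observability} yields observability of the SSM \eqref{equ:RECENDT_002}--\eqref{equ:RECENDT_003}, and Theorem~\ref{theorem:controllability} (together with the extra constraint on $\ve{d}$) yields controllability. By the classical minimality criterion for LTI systems, the realization $(\m{A},\ve{g},\ve{c}^T)$ is therefore minimal, i.e., its state dimension $2N_{\ve{z}}$ is equal to the McMillan degree of the associated transfer function $\ve{c}^T(z\m{I}-\m{A})^{-1}\ve{g}$.

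For uniqueness up to similarity, I would invoke the standard uniqueness theorem for minimal realizations: if $(\m{A},\ve{g},\ve{c}^T)$ and $(\m{A}',\ve{g}',\ve{c}'^T)$ are two minimal realizations of the same transfer function, then there exists a unique invertible $\m{T}$ with $\m{A}'=\m{T}\m{A}\m{T}^{-1}$, $\ve{g}'=\m{T}\ve{g}$, $\ve{c}'^T=\ve{c}^T\m{T}^{-1}$. Applying this to our realization completes the proof. The only real ingredient beyond a direct citation is verifying that Theorems~\ref{theorem:observability} and~\ref{theorem:controllability} apply simultaneously, which is immediate from the shared hypotheses; there is no genuine obstacle, so the proof is essentially a one-line combination of the preceding two theorems with the minimal-realization theorem.
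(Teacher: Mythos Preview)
Your proposal is correct and matches the paper's (implicit) approach: the paper states the corollary without proof, treating it as an immediate consequence of the preceding observability and controllability theorems together with the standard minimal-realization theorem for LTI systems. Your write-up makes explicit precisely what the paper leaves tacit, and there is nothing to add or correct.
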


We emphasize that the conditions in Theorems~\ref{theorem:stability}-\ref{theorem:observability} can be easily verified since the eigenvalues are defined exactly in Lemma~\ref{lemma:eig_A}. Therefore, one can set the time resolution $\Delta_t$, the spatial resolution $\Delta_z$, the relaxation time $\tau$, and the ultrasound wave velocity $c_0$ in such a way that the corresponding SSM is asymptotically stable, observable and controllable. Another advantage of the proposed method is that the state matrix can be transformed into a diagonal canonical form by using Lemma~\ref{lemma:eigvecs_A}. Therefore, the transfer function of the SSM can be defined via partial fraction expansion, which permits frequency-domain analysis as well.

\section{Linear Model Formulation}
\label{sec:Linear_Model_Formulation}

We will now derive a linear connection between $\ve{d}$ and the measurements. For that, we introduce the vector representation of all measurements $y_k$ for $k = 0, \hdots, N_\ve{y}-1$ as $\ve{y} \in \mathbb{R}^{N_\ve{y} \times 1}$. We begin with the impulse response $h_{u,y}[k]$ from the input $u_k$ to the measurements, which is given as
\begin{align}
h_{u,y}[k] = \begin{cases} 0, & \text{for } k=0 \\
\ve{c}^T \m{A}^{k-1} \ve{g}, & \text{for } 1 \leq k \leq N_\ve{y} -1 \end{cases}, \label{equ:RECENDT_023}
\end{align}
We write the elements of $h_{u,y}[k]$ for $k = 0, \hdots, N_\ve{y}-1$ compactly in vector form as
\begin{align}
\ve{h}_{u,y} = \begin{bmatrix}
0 \\  \ve{c}^T  \\  \ve{c}^T \m{A} \\ \ve{c}^T \m{A}^2 \\ \vdots \\ \ve{c}^T  \m{A}^{N_\ve{y}-2}
\end{bmatrix}  \ve{g}. \label{equ:RECENDT_024}
\end{align}
$u_k$ is connected with the laser intensity according to $u_k = i_k - i_{k-1}$ (cf. \eqref{equ:RECENDT_014}). In other words, the impulse response from $i_k$ to $u_k$, denoted by $h_{i,u}[k]$, has a length of 2 and is given by $h_{i,u}[k] = \left[ 1, \, -1 \right]^T$. The impulse response from the laser intensity $i_k$ to the measurements therefore follows to 
$h_{i,y}[k] = h_{i,u}[k] * h_{u,y}[k]$, or in vector form as
\begin{align}
\ve{h}_{i,y} =& \underbrace{ \left( \begin{bmatrix}
0 \\  \ve{c}^T \\ \ve{c}^T\m{A} \\ \ve{c}^T\m{A}^2 \\ \vdots \\  \ve{c}^T\m{A}^{N_\ve{y}-2}
\end{bmatrix} - \begin{bmatrix}
0 \\ 0 \\  \ve{c}^T \\  \ve{c}^T\m{A} \\ \vdots \\  \ve{c}^T\m{A}^{N_\ve{y}-3}
\end{bmatrix}  \right) }_{\m{M}_6} \ve{g} \label{equ:RECENDT_027} \\
=&\, \m{M}_6 \ve{g}. \label{equ:RECENDT_028}
\end{align}
Note that in contrast to $h_{u,y}[k]$, $h_{i,y}[k]$ practically decreases to zero and can be well approximated as finite impulse response (FIR) with length $N_\ve{y}$. We assume the discretized laser intensity $i_k$ has significant values within the first $N_\ve{i}<N_\ve{y}$ time steps and is zero for all remaining time steps. These $N_\ve{i}$ values of $i_k$ written in vector form are denoted by $\ve{i} \in \mathbb{R}^{N_\ve{i} \times 1}$. In the following, we refer to $\ve{i}$ as laser modulation signal.

The vector of measurements is then given by 
\begin{align}
\ve{y} =\m{C} \ve{h}_{i,y} + \ve{w}, \label{equ:RECENDT_030}
\end{align}
where $\m{C} \in \mathbb{R}^{N_\ve{y} \times N_\ve{y}}$ represents the corresponding discrete convolution operator, i.e. it is a Toeplitz matrix constructed by the elements of $\ve{i}$, and $\ve{w}$ is a zero mean Gaussian noise vector containing the noise samples $w_k$ in \eqref{equ:RECENDT_003}. Combining \eqref{equ:RECENDT_028} and \eqref{equ:RECENDT_030} leads to
\begin{align}
\ve{y} =&\,  \m{C} \m{M}_6 \ve{g} + \ve{w}=\m{C} \m{M}_6 \begin{bmatrix} \m{I} \\ \m{0} \end{bmatrix} \ve{f} + \ve{w}=\m{C} \m{M}_6 \begin{bmatrix} \m{I} \\ \m{0} \end{bmatrix} \m{M}_1^{-1}\ve{b} + \ve{w}. 
\label{equ:RECENDT_031}
\end{align}
Furthermore, $\ve{b}$ is connected with $\ve{d}$ according to \eqref{equ:RECENDT_014}, allowing for
\begin{align}
\ve{y} =&\,  \underbrace{- \frac{\beta \chi}{C_p \Delta_t} \m{C} \m{M}_6\begin{bmatrix} \m{I} \\ \m{0} \end{bmatrix} \m{M}_1^{-1} }_{\m{H}}\ve{d} + \ve{w} \label{equ:RECENDT_033} \\
=& \, \m{H}\ve{d} + \ve{w}. \label{equ:RECENDT_034}
\end{align}
This result shows that the unknown vector $\ve{d}$, which contains the unknown absorption profile according to \eqref{equ:RECENDT_012a}, is linearly connected with the measurement vector $\ve{y}$ in our discretized model.

\section{Estimation of the Absorption Profile}
\label{sec:Estimation}

In this section we apply various estimators on the measurement vector $\ve{y}$ in order to estimate the vector $\ve{d}$. It is followed by a non-linear procedure that allows for estimating the absorption profile $\bm{\mu}$ based on the estimates of $\ve{d}$. By doing so, instead of estimating the unknown pressure profile $\ve{p}_k$, we can directly estimate the vectors $\ve{d}$ and $\bm{\mu}$ based on the measurements $\ve{y}$.

The noise vector $\ve{w}$ in \eqref{equ:RECENDT_034} is assumed to consist of $N_\ve{y}$ white Gaussian noise samples with variance $\sigma_w^2$. Hence, the noise covariance matrix is a scaled identity matrix $\m{C}_{\ve{w}\ve{w}} = \sigma_w^2 \m{I}$. With this condition, the optimal estimator for the model in \eqref{equ:RECENDT_034} in a least squares (LS) sense is given by \cite{Kay-Est.Theory} as follows:
\begin{align}
\hat{\ve{d}} = \left( \m{H}^T \m{H} \right)^{-1} \m{H}^T \ve{y}. \label{equ:RECENDT_036a}
\end{align}
Although this is the best linear unbiased estimator (BLUE) of $\ve{d}$, it is only applicable if $\m{H}$ has full column rank. Typically, the matrix $\m{H}$ is highly ill-conditioned, thus the solution $\hat{\ve{d}}$ is very sensitive to small perturbations, therefore regularization is inevitable. In our experiments we apply two widely used direct methods: the Tikhonov regularization and the truncated/damped singular value decomposition (TSVD/DSVD). These are biased estimators \cite{ridgereg}, however, they can give estimates with less mean squared error (MSE) by choosing a proper regularization parameter. According to Chapter 6.1 in \cite{regmethods}, the regularization methods can be discussed in a unified framework. Namely, let us consider the SVD of $\IR^{N_{\m{y}} \times N_{\m{z}}}\ni\m{H}=\m{U}\m{\Sigma}\m{V}^T$, where $\IR^{N_{\m{y}} \times N_{\m{y}}}\ni\m{U}=\left(\m{u}_1,\ldots,\m{u}_{N_{\m{y}}}\right),\; \IR^{N_{\m{z}} \times N_{\m{z}}}\ni\m{V}=\left(\m{v}_1,\ldots,\m{v}_{N_{\m{z}}}\right)$ are orthonormal matrices, and $\IR^{N_{\m{y}}\times N_{\m{z}}} \ni \m{\Sigma}=\diag\left(\sigma_1,\ldots,\sigma_{N_{\m{z}}}\right)$ with singular values $\sigma_1 \geq \ldots \geq \sigma_{N_{\m{z}}} \geq 0$. Then the regularized solution to \eqref{equ:RECENDT_034} is of the form
\begin{align}
\hat{\ve{d}}_{\text{reg}} = \sum_{j=1}^{N_{\m{z}}} f_j \frac{\m{u}_j^T \m{y}}{\sigma_j} \m{v}_j, \label{equ:regsol_general}
\end{align}
The filter factors $f_j$'s are responsible for controlling the spectral contents of the solution. Generally, they are chosen to eliminate high-frequency components with small $\sigma_j$. In Section~\ref{sec:experiments}, we will apply the following filter factors:
\begin{align}
f_j^{\text{TSVD}}=
	\begin{cases} 1, & \text{for } j=1,\ldots,k_c \\
								0, & \text{for } j=k_c,\ldots,N_\ve{y} 
	\end{cases}, \quad 
f_j^{\text{DSVD}}=\frac{\sigma_j}{\sigma_j+\omega}, \quad	
f_j^{\text{Tikh}}=\frac{\sigma_j^2}{\sigma_j^2+\omega^2},	
	\label{equ:filt_facts}
\end{align}
where $k_c$ is called truncation or cutoff index, and $\omega>0$ is the regularization parameter. Note that the TSVD applies an ideal filter to the spectral content of the solution, while Tikhonov regularization and DSVD allows smoothing in a wider transition band. In order to control the rate of smoothing the regularization parameters $k_c$ and $\omega$ should be chosen properly. Here, we utilize the work of Hansen and O'Leary \cite{lcurve}, in which they estimate the optimal regularization parameter based on the so-called L-curve. This is a log-log plot of the norm of the regularized solution $\left\|\hat{\ve{d}}_{\text{reg}}\right\|_2$ versus the corresponding reconstruction error $\left\|\m{y}-\m{H} \hat{\ve{d}}_{\text{reg}} \right\|_2$. The point with maximum curvature is called corner, which separates the solutions into under- and overregularized sets. Therefore, choosing $k_c$ and $\omega$ corresponding to the corner point is a good tradeoff between regularization and perturbation errors. Generalized cross-validation (GCV) is another option for estimating the optimal regularization parameter. However, in \cite{lcurve}, it was shown that the corner point of the L-curve is a more robust estimator, especially for highly correlated errors, i.e. when $\m{C}_{\ve{w}\ve{w}}$ is not diagonal. 

Now, in order to estimate the absorption profile, let us recall the relation between $\ve{d}$ and $\bm{\mu}$:
\begin{align}
\ve{d} = \begin{bmatrix}
d_0 \\ d_1 \\ d_2  \\ \vdots \\  d_{N_\ve{z}-1} \end{bmatrix} = \begin{bmatrix}
\mu_0 \\ \mu_1 a_0 \\ \mu_2 a_1 a_0 \\ \vdots \\  \mu_{N_\ve{z}-1} a_{N_\ve{z}-2} a_{N_\ve{z}-3} \hdots a_1 a_0
\end{bmatrix} \label{equ:RECENDT_035}
\end{align}
(cf. \eqref{equ:RECENDT_012a}), where $a_n$ is given by $a_n = \mathrm{e}^{-\mu_n \Delta_z}$.
According to \eqref{equ:RECENDT_035}, an estimate of the first element of the absorption profile $\mu_0$ is given by the first element of $\hat{\ve{d}}$ via $\hat{\mu}_0 = \hat{d}_0$. This estimate is utilized to derive an estimate for $a_0$ as $\hat{a}_0 = \mathrm{e}^{-\hat{\mu}_0 \Delta_z}$. Using this result to approximate $a_0$ in \eqref{equ:RECENDT_035} immediately leads to an estimate of the next entry of $\bm{\mu}$ according to $\hat{\mu}_1 = \frac{\hat{d}_1}{\hat{a}_0}$. Generally, the coordinates of $\hat{\bm{\mu}}$ can be computed as follows:
\begin{equation}
	\hat{\mu}_{n}\,= \frac{\hat{d}_{n}}{\prod_{j=0}^{n-1}\hat{a}_{j}}=\frac{\hat{d}_{n}}{\exp\left({-\sum_{j=0}^{n-1}\hat{\mu}_{j}}\Delta_z\right)} \qquad (n = 0,\hdots, N_\ve{z}-1)\,.
	\label{eq:mu_recursion}
\end{equation}

We emphasize that the proposed reconstruction method is very general and widely applicable. It can deal with completely arbitrary laser modulation signals and it is not constrained to signals with a well-behaving autocorrelation function, chirped signals, or signals with varying spaces between short pulses. Many effects are accounted for such as frequency dependent attenuation and a decrease in laser intensity due to absorption inside the probe.

\section{Optimization of the Laser Intensity Function}
\label{sec:Optimization}
Estimating the absorption profile is a difficult problem due to the ill-conditioned system of linear equations in \eqref{equ:RECENDT_034}, and therefore, we used various methods for estimating the absorption profile. Although the BLUE was the simplest estimator among them, it provides a possibility for finding an optimal laser modulation function. To this end, let us consider the vector $\hat{\bm{\mu}}$ that results from a non-linear transformation of the estimates $\hat{\ve{d}}$. However, the more accurate the estimated vector $\hat{\ve{d}}$ is, the more precise $\hat{\bm{\mu}}$ becomes. Hence, we focus on the estimation accuracy of $\hat{\ve{d}}$ since the measurements are linear in $\ve{d}$ and the error statistics of $\hat{\ve{d}}$ are analytically tractable in case of the BLUE. We investigate if an optimal laser modulation function in terms of the discrete-time laser modulation signal $\ve{i}$ can be found to minimize the error variances in $\hat{\ve{d}}$.

The corresponding error covariance matrix of the estimates $\hat{\ve{d}}$ in \eqref{equ:RECENDT_036a} is given by \cite{Kay-Est.Theory}
\begin{align}
\m{C}_{\hat{\ve{d}}\hat{\ve{d}}} = \sigma_w^2 \left( \m{H}^T \m{H}\right)^{-1}. \label{equ:RECENDT_037}
\end{align}
Note that this error covariance matrix does not account for the process noise in \eqref{equ:RECENDT_002}. With this restriction, it describes the confidence in the estimates $\hat{\ve{d}}$. Furthermore, since $\m{H}$ in  \eqref{equ:RECENDT_036a} is a function of $\ve{i}$ according to \eqref{equ:RECENDT_033}, the confidence in the estimates $\hat{\ve{d}}$ is also a function of $\ve{i}$. This fact is utilized in the following where \eqref{equ:RECENDT_037} is used as basis for formulating an optimization problem in order to find an optimal laser modulation signal $\ve{i}$.

The main diagonal elements of $\m{C}_{\hat{\ve{d}}\hat{\ve{d}}}$ in \eqref{equ:RECENDT_037} contain the variances of the estimates. In order to minimize these variances we use the trace of $\m{C}_{\hat{\ve{d}}\hat{\ve{d}}}$ as cost function
\begin{align}
J(\ve{i}) = \mathrm{trace} \left(\m{C}_{\hat{\ve{d}}\hat{\ve{d}}}\right). \label{equ:RECENDT_038}
\end{align}
Now the goal is to find the vector $\ve{i}$ that minimizes this cost function. However, we define several constraints on $\ve{i}$, such that it represents a practicable modulation sequence. Some obvious constraints are as follows:
\begin{enumerate}
\item $\ve{i}$ must be time-limited. 
\item The elements of $\ve{i}$ must be larger than or equal to $0$ and smaller than or equal to $1$. The first requirement origins from the fact that a laser intensity is always positive from a physical point of view. The upper bound accounts for the restriction that every laser has a maximum output power that cannot be exceeded. 
\item The energy shall be below some upper limit.
\item $\ve{i}$ shall be band-limited in order to easily transform $\ve{i}$ into a continuous-time laser modulation function $i(t)$ without aliases.
\end{enumerate}
These are just some meaningful examples of possible constraints. In practice, maybe more constraints origin from the concrete laser setup. In Sections~\ref{sec:Demonstration_1}-\ref{sec:Demonstration_4}, we present two examples applying the same constraints as listed above, and discuss the resulting optimized laser modulation signals. Note that the cost function in \eqref{equ:RECENDT_038} as well as some constraints are non-linear in $\ve{i}$. Hence, the optimization problem is very demanding and a global minimum is unlikely to be found. However, local minima can be found by numerical optimization. We will show the potential in this approach by presenting several simulation results in the next section. We begin with two examples where our reconstruction method is demonstrated. After that, we discuss the performance gain of the proposed optimization procedure.

\section{Simulations}
\label{sec:experiments}
In these demonstrations, we estimate the absorption profile from synthesized measurement data $\m{y}$ with the methods described in Section~\ref{sec:Estimation}. This gives an impression about the accuracy of the proposed algorithm for estimating $\ve{d}$ and the absorption profile $\bm{\mu}$. The following numerical experiments and the corresponding MatLab implementations can be found as supplementary material of the paper, in which we used external libraries such as the regularization tool package \cite{regtool}. It is also worth mentioning that, for real practical problems, positivity of the absorption profile $\bm{\mu}$, and thus the positivity of $\m{d}$ is a valid assumption. Therefore, we included the non-negative Tikhonov regularization in our comparative study, that we implemented by using the \texttt{lsqnonneg} MatLab routine. 

Note that in the following, we do not compare our reconstruction method with a competing approach. The reason for this is that, to the best of our knowledge, there exists no reconstruction method in literature that allows for arbitrary absorption profiles, arbitrary laser modulation signals, that accounts for both the absorption of the laser intensity with increasing depth, and the frequency dependent attenuation of the ultrasound waves.
  
\subsection{Example for estimating a smooth absorption profile}
\label{sec:Demonstration_1}

We begin with a single reconstruction task. For that, we choose the parameters listed in Tab.~\ref{tab:Example1}. Furthermore, as laser modulation signal $i_k$ we take a short pulse with a duration of approximately $10\,ns$ as shown in Fig.~\ref{fig:Example1fff}. 

\begin{table}[ht]
\begin{center}
	\caption{Model parameters used for the first demonstration example. \label{tab:Example1} }
  \begin{tabular}{ | c | c | c | }
  	\hline
  	\textbf{Parameter} & \textbf{Value} & \textbf{Unit} \\
    \hline \hline
    $\Delta_t$ 		& $10^{-9}$ 			& s 	\\ \hline
    $\Delta_z$ 		& $30 \cdot 10^{-7}$ 	& m 	\\ \hline
    $N_\ve{z}$ 		& $20$ 					& 		\\ \hline
    $N_\ve{y}$ 		& $100$ 				& 		\\ \hline
    $N_\ve{i}$ 		& $20$ 					& 		\\ \hline
    $\sigma_q^2$ 	& $10^{-28}$  			& $N^2/m^4$ 		\\ \hline
    $\sigma_w^2$ 	& $10^{-10}$  			& $N^2/m^4$		\\ \hline
    $\tau$ 			& $77 \cdot 10^{-12}$  	& s		\\ \hline
    $\chi$ 			& $3\cdot 10^{-2}$  	& 		\\ \hline
    $\beta / C_p$ 		& $1 / 1$  					& 		\\ \hline
    $c_0$ 			& $1500$  				& m/s	\\ \hline
  \end{tabular}
\end{center}
\end{table}

The measurement vector $\ve{y}$ is generated using the state space model in \eqref{equ:RECENDT_002} and \eqref{equ:RECENDT_003}, which is indicated by the black curve in Fig.~\ref{fig:Example1fff}. Based on the measurements, $\ve{d}$ is estimated according to \eqref{equ:RECENDT_034} by applying the methods presented in Section~\ref{sec:Estimation}. Then an estimate of the true absorption profile is obtained via \eqref{eq:mu_recursion}, which is displayed in Fig.~\ref{fig:Example1}. Here, the measurement noise $\sigma_w^2$ is equal to $10^{-10}$, which results in a measurement vector $\m{y}$ with signal-to-noise ratio $\text{SNR}(\m{y})=93.6$ dB. In this setting, the estimated $\hat{\bm{\mu}}$ is very close to the true $\bm{\mu}$ and the estimation preserves the main characteristic of the original absorption profile.

Repeatedly performing this estimation procedure allows deriving statistics about the estimation error. The resulting root mean square errors $\text{RMSE}(\hat{\ve{d}})= \sqrt{E[ \| \ve{d} - \hat{\ve{d}} \|_2^2]}$ and $\text{RMSE}(\hat{\bm{\mu}})= \sqrt{E[ \| \bm{\mu} - \hat{\bm{\mu}} \|_2^2]}$ plotted over the spatial coordinate are shown in Fig.~\ref{fig:Example1_sdt_dev}. This graph reveals that the errors tend to increase as the spatial depth increases. This behavior has several reasons. 1) The arriving laser intensity decreases due to absorption inside the probe. 2) The damping of the ultrasound waves between generation and arriving at the sensor increases as the spatial depth increases. 3) Since the ultrasound waves generated at larger indexes $n$ require a longer time to arrive at the sensor, more noise samples $q_k$ in \eqref{equ:RECENDT_002} influence the measurements. The fact that both, the error of $\hat{\ve{d}}$ and the error of $\hat{\bm{\mu}}$ decrease for indexes $n \geq 17$ is due to the decreasing values of $\ve{d}$ and $\bm{\mu}$ for these indexes and strongly depends on the absorption profile. We also draw the attention to the fact that the error of $\hat{\ve{d}}$ is smaller or equal to the error of $\hat{\bm{\mu}}$ (see e.g. Fig.~\ref{fig:Example1_sdt_dev}). This phenomenon is due to the successive estimation of the absorption profile in \eqref{eq:mu_recursion}. Therefore, the errors of $\hat{\mu}_n$ decrease the accuracy of all estimates with larger indices. 

In order to investigate the impact of the measurement noise on the accuracy of our method, we varied $\sigma^2_w$ between $10^{-10}$ and $10^{-1}$. One can see the results in Fig.~\ref{fig:example1_opt}, where we displayed the averaged (A)RMSE of $100$ simulation runs for each value of $\sigma^2_w$ with the corresponding SNR's indicated on the top of the plot. For both quantities $\hat{\ve{d}}$ and $\hat{\bm{\mu}}$, the BLUE shows the worst estimation accuracy as the variance of the noise increases, while the non-negative Tikhonov regularization achieves the best performance with the lowest ARMSE. The latter is not surprising, since we are utilizing an additional information during the estimation, which is the non-negativity of the absorption profile. In Fig.~\ref{fig:example1_reconstruction_with_high_noise}, we are presenting an example as a worst-case scenario. Although the SNR is very low, i.e. $3.5$ dB, the main characteristic of the absorption profile is retained by the Tikhonov and the non-negative Tikhonov regularizations. Namely, the two main side peaks and the small middle peak can be observed in those approximations. However, the damped SVD gives a highly oscillating estimation. In this case, the BLUE estimation was so bad we excluded it from the plot.

We emphasize that basically the same model utilized for generating the measurements was also used by the reconstruction method except for the process noise in \eqref{equ:RECENDT_002}. However, to the best of our knowledge, there exists no alternative simulator that employs Stokes' PDE and that allows for arbitrary laser modulation signals and arbitrary absorption profiles including all mentioned effects that we account for in this work. Furthermore, we highlight that the process noise in \eqref{equ:RECENDT_002} results in a random mismatch between the model used for generating the measurements and the model used by the reconstruction algorithm. 

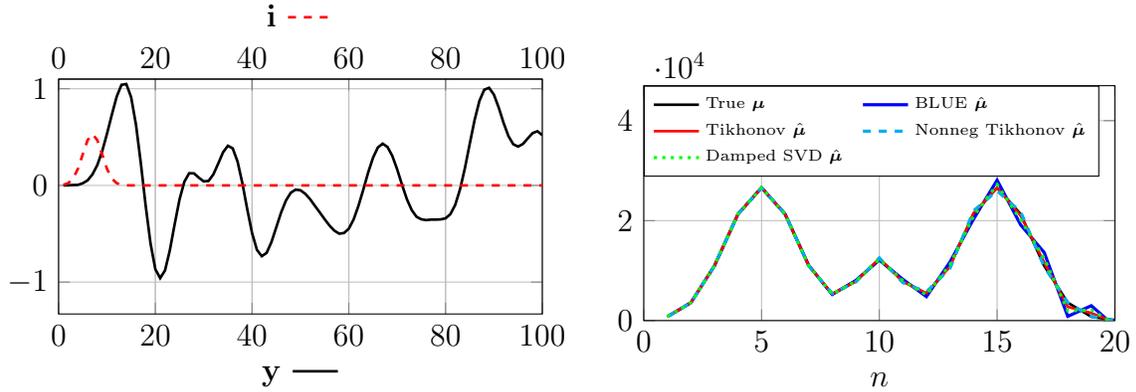
\begin{figure}[!t]
\centering
\vspace{2mm}
  \subfigure[The laser modulation signal $\ve{i}$ and the measured pressure signal $\ve{y}$.]{
	\label{fig:Example1fff}
\begin{tikzpicture}
\pgfplotsset{set layers}
\begin{axis}[
width=1.9\columnwidth, height = .3\columnwidth,
xmin=0,xmax=100,
ymin=-1.33,ymax=1.1,
xlabel={$\ve{y}$ \ref{fig:Example1fff_axis1} },
unit vector ratio*=1 20 1,
        legend style={at={(1,0)}, anchor=south east,
								/tikz/column 2/.style={
                column sep=5pt,
            },
        font=\tiny},
grid=major,
]
\addplot[line width=1pt, color=black] table[x index =0, y index =1] {./Example_1_y_and_low_noise.dat};
\label{fig:Example1fff_axis1} 
\end{axis}
\pgfplotsset{set layers}
\begin{axis}[
width=1.9\columnwidth, height = .3\columnwidth,
xmin=0,xmax=100,
ymin=-1.33,ymax=1.1,
axis y line=none,
axis x line*=top,
xlabel style = {align=center},
xlabel={$\ve{i}$ \ref{pgfplots:pulse} },
unit vector ratio*=1 20 1,
        legend style={at={(1,0)}, anchor=south east,
								/tikz/column 2/.style={
                column sep=5pt,
            },
        font=\tiny},
grid=major,
]
\addplot[line width=1pt, color=red, style=dashed] table[x index =0, y index =3] {./Example_1_y_and_low_noise.dat};
\label{pgfplots:pulse}
\end{axis}
\end{tikzpicture}
  } \hspace{0mm}
	\subfigure[The true and the estimated absorption profile.]{
		\label{fig:Example1}	
	\begin{tikzpicture}
    \begin{axis}[%
    compat=newest, 
		width=0.5\columnwidth, height = .3\columnwidth,
		ylabel style={align=center}, 
		ylabel style={text width=3.4cm}, 
		xlabel=$n$,
		legend pos=north west, 
		legend cell align=left,
		legend columns=2, 
        legend style={at={(0,1)}, anchor=north west,
								/tikz/column 2/.style={
                column sep=5pt,
            },
        font=\tiny},
		xmin = 0,
		xmax = 20,
		ymax = 4.7e4,
		ymin = 0,
		grid=major]
			\addplot[line width=1pt, color=black] table[x index =0, y index =1] {./Example_1_low_noise_mu.dat};
			\addlegendentry{True $\bm{\mu}$}
			\addplot[line width=1.2pt, color=blue] table[x index =0, y index =2] {./Example_1_low_noise_mu.dat};
			\addlegendentry{BLUE $\hat{\bm{\mu}}$}
			\addplot[line width=1pt, color=red] table[x index =0, y index =3] {./Example_1_low_noise_mu.dat};
			\addlegendentry{Tikhonov $\hat{\bm{\mu}}$}
			\addplot[line width=1.2pt, color=cyan, style=dashed] table[x index =0, y index =4] {./Example_1_low_noise_mu.dat};
			\addlegendentry{Nonneg Tikhonov $\hat{\bm{\mu}}$}
			\addplot[line width=1.2pt, color=green, style=dotted] table[x index =0, y index =5] {./Example_1_low_noise_mu.dat};
			\addlegendentry{Damped SVD $\hat{\bm{\mu}}$}		
    \end{axis}	
	\end{tikzpicture}	
  } \hspace{0mm}
	\centering
\caption{Application of the linear model for estimating the absorption profile.}
\label{fig:example1_reconstruction}
\end{figure}

\begin{figure}[!t]
\centering
\vspace{2mm}
  \subfigure[Relation between the RMSE of $\hat{\ve{d}}$ and $\hat{\bm{\mu}}$ for the BLUE and the Tikhonov regularization.]{
\begin{tikzpicture}

    \begin{axis}[%
		  scaled y ticks=base 10:-2,
    	name=plot1,
    	compat=newest, 
		width=0.5\columnwidth, height = .3\columnwidth, xlabel=$n$, 
		ylabel style={align=center}, 
		ylabel style={text width=3.4cm},
		legend pos=north west, 
		legend cell align=left,
		legend columns=1, 
        legend style={at={(0,1)}, anchor=north west,
            /tikz/column 2/.style={
                column sep=5pt,
            },
        font=\tiny},
		xmin = 0,
		xmax = 20,
		ymax = 1.5e2,
		ymin = 0,
		grid=major]
			\addplot[line width=1pt, color=blue, style=dotted] table[x index =0, y index =1] {./Example_1_ARMSE_low_noise.dat};
			\addlegendentry{{BLUE $\text{RMSE}(\hat{\bm{\mu}})$}}
			
			\addplot[line width=0.5pt, color=blue] table[x index =0, y index =2] {./Example_1_ARMSE_low_noise.dat};
			\addlegendentry{{BLUE $\text{RMSE}(\hat{\ve{d}})$}}
			
			\addplot[line width=1pt, color=red, style=dotted] table[x index =0, y index =3] {./Example_1_ARMSE_low_noise.dat};
			\addlegendentry{{Tikhonov $\text{RMSE}(\hat{\bm{\mu}})$}}
			
			\addplot[line width=0.5pt, color=red] table[x index =0, y index =4] {./Example_1_ARMSE_low_noise.dat};
			\addlegendentry{{Tikhonov $\text{RMSE}(\hat{\ve{d}})$}}
						
    \end{axis}
    \label{fig:Example1_sdt_dev_a}
\end{tikzpicture}
  } \hspace{0mm}
	\subfigure[Relation between the RMSE of $\hat{\ve{d}}$ and $\hat{\bm{\mu}}$ for the non-negative Tikhonov regularization and the damped SVD.]{
\begin{tikzpicture}

    \begin{axis}[%
		  scaled y ticks=base 10:-2,
    	name=plot1,
    	compat=newest, 
		width=0.5\columnwidth, height = .3\columnwidth, xlabel=$n$, 
		ylabel style={align=center}, 
		ylabel style={text width=3.4cm},
		legend pos=north west, 
		legend cell align=left,
		legend columns=1, 
        legend style={at={(0,1)}, anchor=north west,
            /tikz/column 2/.style={
                column sep=5pt,
            },
        font=\tiny},
		xmin = 0,
		xmax = 20,
		ymax = 1.5e2,
		ymin = 0,
		grid=major]
			\addplot[line width=1pt, color=cyan, style=dotted] table[x index =0, y index =5] {./Example_1_ARMSE_low_noise.dat};
			\addlegendentry{{Nonneg Tikhonov $\text{RMSE}(\hat{\bm{\mu}})$}}
			
			\addplot[line width=1pt, color=cyan] table[x index =0, y index =6] {./Example_1_ARMSE_low_noise.dat};
			\addlegendentry{{Nonneg Tikhonov $\text{RMSE}(\hat{\ve{d}})$}}
			
			\addplot[line width=1pt, color=green, style=dotted] table[x index =0, y index =7] {./Example_1_ARMSE_low_noise.dat};
			\addlegendentry{{Damped SVD $\text{RMSE}(\hat{\bm{\mu}})$}}
			
			\addplot[line width=0.5pt, color=green] table[x index =0, y index =8] {./Example_1_ARMSE_low_noise.dat};
			\addlegendentry{{Damped SVD $\text{RMSE}(\hat{\ve{d}})$}}
						
    \end{axis}
    \label{fig:Example1_sdt_dev_b}
\end{tikzpicture}
  } \hspace{0mm}
	\centering
\caption{RMSE of $\hat{\ve{d}}$ and $\hat{\bm{\mu}}$  averaged over $100$ simulation runs for all grid points within the area of interest $n = 0, \hdots, N_\ve{z}-1$ with average $\text{SNR}(\m{y})=93.6$ dB.}
\label{fig:Example1_sdt_dev}
\end{figure}

\begin{figure}[!t]
\centering
\subfigure[ARMSE of $\hat{\m{d}}$.]{
	\begin{tikzpicture}
    \begin{loglogaxis}[%
   	name=plot1,
   	compat=newest, 
		unit vector ratio*=0.6 1 1,
		scale=0.8,
		xlabel=$\sigma_w^2$, 
		ylabel style={align=center}, 
		ylabel style={text width=3.4cm},
		scaled x ticks=base 10:-1,
		scaled y ticks=base 10:6,
		legend pos=north west, 
		legend cell align=left,
		legend columns=1, 
		xmin=1e-10, xmax=1e-1,
	  ymin=1e+1, ymax=0.98e+6,
        legend style={at={(0,1)}, anchor=north west,
            /tikz/column 2/.style={
                column sep=5pt,
            },
        font=\tiny},
		grid=major]
			\addplot[line width=1pt, color=blue] table[x index =0, y index =1] {./Example_1_ARMSE_mu.dat};
			\addlegendentry{{BLUE}}
			
			\addplot[line width=1pt, color=red] table[x index =0, y index =2] {./Example_1_ARMSE_mu.dat};
			\addlegendentry{{Tikhonov}}

			\addplot[line width=1pt, color=cyan, style=dashed] table[x index =0, y index =3] {./Example_1_ARMSE_mu.dat};
			\addlegendentry{{Nonneg Tikhonov}}

			\addplot[line width=1pt, color=green, style=dotted] table[x index =0, y index =4] {./Example_1_ARMSE_mu.dat};
			\addlegendentry{{Damped SVD}}

    \end{loglogaxis}
    \begin{loglogaxis}[%
   	name=plot1,
   	compat=newest, 
		unit vector ratio*=0.6 1 1,
		scale=0.8,
		axis y line=none,
		axis x line*=top,
		xlabel style={align=center}, 
		xlabel style={text width=3.4cm},
		xlabel={$\text{SNR}(\m{y})$ in dB}, 
		xtick=data,
		xticklabels={93.6, 63.6, 33.6, 3.5},
		legend pos=north west, 
		legend cell align=left,
		legend columns=1, 
		xmin=1e-10, xmax=1e-1,
	  ymin=1e+1, ymax=1e+6,
		]
    \end{loglogaxis}	
\end{tikzpicture}
\label{fig:Example1_mu_armse}
}
\subfigure[ARMSE of $\hat{\bm{\mu}}$.]{
	\begin{tikzpicture}
    \begin{loglogaxis}[%
   	name=plot1,
   	compat=newest, 
		unit vector ratio*=0.6 1 1,
		scale=0.8,
		xlabel=$\sigma_w^2$, 
		ylabel style={align=center}, 
		ylabel style={text width=3.4cm},
		scaled x ticks=base 10:-1,
		scaled y ticks=base 10:6,
		legend pos=north west, 
		legend cell align=left,
		legend columns=1, 
		xmin=1e-10, xmax=1e-1,
	  ymin=1e+1, ymax=0.98e+6,
        legend style={at={(0,1)}, anchor=north west,
            /tikz/column 2/.style={
                column sep=5pt,
            },
        font=\tiny},
		grid=major]
			\addplot[line width=1pt, color=blue] table[x index =0, y index =1] {./Example_1_ARMSE_d.dat};
			\addlegendentry{{BLUE}}
			
			\addplot[line width=1pt, color=red] table[x index =0, y index =2] {./Example_1_ARMSE_d.dat};
			\addlegendentry{{Tikhonov}}

			\addplot[line width=1pt, color=cyan, style=dashed] table[x index =0, y index =3] {./Example_1_ARMSE_d.dat};
			\addlegendentry{{Nonneg Tikhonov}}

			\addplot[line width=1pt, color=green, style=dotted] table[x index =0, y index =4] {./Example_1_ARMSE_d.dat};
			\addlegendentry{{Damped SVD}}

    \end{loglogaxis}
    \begin{loglogaxis}[%
   	name=plot1,
   	compat=newest, 
		unit vector ratio*=0.6 1 1,
		scale=0.8,
		axis y line=none,
		axis x line*=top,
		xlabel style={align=center}, 
		xlabel style={text width=3.4cm},
		xlabel={$\text{SNR}(\m{y})$ in dB}, 
		xtick=data,
		xticklabels={93.6, 63.6, 33.6, 3.5},
		legend pos=north west, 
		legend cell align=left,
		legend columns=1, 
		xmin=1e-10, xmax=1e-1,
	  ymin=1e+1, ymax=1e+6,
		]
    \end{loglogaxis}	
\end{tikzpicture}
\label{fig:Example1_d_armse}
}
\caption{ARMSE of the estimated quantities $\m{d}$ and $\bm{\mu}$ for Example 1.}%
\label{fig:example1_opt}%
\end{figure}

\begin{figure}[!t]
\centering
\vspace{4mm}
	\subfigure[The measurement vector with and without noise.]{
\begin{tikzpicture}
\pgfplotsset{set layers}
\begin{axis}[
width=1.9\columnwidth, height = .3\columnwidth,
xmin=0,xmax=100,
ymin=-1.53,ymax=1.3,
xlabel=$k$,
unit vector ratio*=1 20 1,
        legend style={at={(1,0)}, anchor=south east,
								/tikz/column 2/.style={
                column sep=5pt,
            },
        font=\tiny},
grid=major,
]
\addplot[line width=1pt, color=black] table[x index =0, y index =1] {./Example_1_y_and_noise.dat};
\addlegendentry{{Noisy measurement $\m{y}$}}
\addplot[line width=0.8pt, color=red] table[x index =0, y index =2] {./Example_1_y_and_noise.dat};
\addlegendentry{{Noise free measurement $\m{y}$}}
\label{fig:Example1_mu_est_2}
\end{axis}
\end{tikzpicture}
  } \hspace{0mm}
	\subfigure[The true and estimated absorption profiles.]{

\begin{tikzpicture}
    \begin{axis}[%
    compat=newest, 
		width=0.5\columnwidth, height = .3\columnwidth,
		ylabel style={align=center}, 
		ylabel style={text width=3.4cm}, 
		xlabel=$n$,
		legend pos=north west, 
		legend cell align=left,
		legend columns=2, 
        legend style={at={(0,1)}, anchor=north west,
								/tikz/column 2/.style={
                column sep=5pt,
            },
        font=\tiny},
		xmin = 0,
		xmax = 20,
		ymax = 5e4,
		ymin = -15000,
		grid=major]
			\addplot[line width=1pt, color=black] table[x index =0, y index =1] {./Example_1_high_noise_mu.dat};
			\addlegendentry{True $\bm{\mu}$}
			\addplot[line width=1pt, color=red] table[x index =0, y index =3] {./Example_1_high_noise_mu.dat};
			\addlegendentry{Tikhonov $\hat{\bm{\mu}}$}
			\addplot[line width=1.2pt, color=cyan, style=dashed] table[x index =0, y index =4] {./Example_1_high_noise_mu.dat};
			\addlegendentry{Nonneg Tikhonov $\hat{\bm{\mu}}$}
			\addplot[line width=1.2pt, color=green, style=dotted] table[x index =0, y index =5] {./Example_1_high_noise_mu.dat};
			\addlegendentry{Damped SVD $\hat{\bm{\mu}}$}
	  \label{fig:Example1_mu_est_3}			
    \end{axis}
	\end{tikzpicture}
  }
\caption{Application of the linear model for estimating the absorption profile with high measurement noise: $\sigma_w^2=10^{-1},\, \text{SNR}(\m{y})=3.5$~dB.}
\label{fig:example1_reconstruction_with_high_noise}
\end{figure}


\subsection{Example for estimating a piecewise constant absorption profile}
\label{sec:Demonstration_3}

Instead of the smooth absorption profiles employed for the previous simulations, we now use a piecewise constant $\bm{\mu}$. Other changes compared to Tab.~\ref{tab:Example1} are listed in Tab.~\ref{tab:Example3}. 

\begin{table}[ht]
\begin{center}
	\caption{Model parameters used for the second demonstration example. \label{tab:Example3} }
  \begin{tabular}{ | c | c | c | }
  	\hline
  	\textbf{Parameter} & \textbf{Value}  & \textbf{Unit} \\ 
    \hline \hline
    $\Delta_t$ 		& $10^{-7}$ & $\mathrm{s}$ 		\\ \hline
    $\Delta_z$ 		& $3\cdot 10^{-3}$ & $\mathrm{m}$ 		\\ \hline
    $N_\ve{z}$ 		& $100$ 			&\\ \hline
    $N_\ve{y}$ 		& $2000$ 		&	\\ \hline
  \end{tabular}
\end{center}
\end{table}

The area of interest now spans a depth of $N_\ve{z}\Delta_z = 3\,\mathrm{cm}$, while the laser modulation signal $\ve{i}$ is a chirp signal depicted in Fig.~\ref{fig:Example3fff} along with the corresponding measurement vector $\ve{y}$. The true absorption profile $\bm{\mu}$ as well as its estimates are shown in Fig.~\ref{fig:Example3}. In this case, the variance of the measurement noise $\sigma_w^2$ is equal to $71.4$~dB, which is worse than in the previous example, however, the estimation of the absorption profile $\hat{\bm{\mu}}$ is still very close to the true $\bm{\mu}$, especially for small indexes $n$. 

Again, the RMSE values of the estimations $\hat{\ve{d}}$ and $\hat{\bm{\mu}}$ averaged over $100$ simulation runs are shown in Fig.~\ref{fig:Example3_sdt_dev}. Similarly to Example~\ref{sec:Demonstration_1}, the errors tend to increase with increasing depth and the error of $\hat{\ve{d}}$ is usually smaller or equal to the error of $\hat{\bm{\mu}}$ due to the successive evaluation of the elements of $\hat{\mu}_n$ in \eqref{eq:mu_recursion}.

We also calculated the ARMSE values for varying levels of measurement noise, which can be seen in Fig.~\ref{fig:example4_opt}. Again, for noisy measurements the BLUE has the highest ARMSE, while the non-negative Tikhonov regularization provides the best estimation. In Fig.~\ref{fig:example4_opt_reconstruction}, we display a worst-case scenario with $\text{SNR}(\m{y})=1.5$~dB. Even for such a noisy measurement data with very low SNR, see e.g.\ Fig.~\ref{fig:Example4_mu_est_2}, our method provides a meaningful estimation of the true absorption profile in Fig.~\ref{fig:Example4_mu_est_3}.   

\begin{table}
\centering
	\caption{Estimation accuracy for different laser modulation signal.}
	\scalebox{0.7}{
\begin{tabular}{|c||c|c|c|c|}
 \multicolumn{5}{c}{} \vspace{-0.5em} \\ \hline
\multirow{1}{*}{\parbox{0.38\textwidth}{\vspace{4mm}\textbf{Laser mod. signal}, $\sigma_w^2=10^{-10}$}} & \multicolumn{4}{|c|}{\textbf{ARMSE}$(\hat{\bm{\mu}})$ \textbf{of the estimations}} \\ \cline{2-5}
\textbf{} & 	\bigstrut \textbf{BLUE} & \textbf{Tikhonov}  & \textbf{non-negative Tikhonov} & \textbf{Damped SVD}\\ \hhline{|=====|}
Short pulse & $3.6\cdot 10^{-3}$ & $1.8\cdot 10^{-1}$ & $7.2\cdot 10^{-3}$ & $5.3\cdot 10^{-1}$ \\ \hline
Chirp & $1.2\cdot 10^{0}$ & $7.7\cdot 10^{-1}$ & $2.7\cdot 10^{-1}$ & $1.1\cdot 10^{0}$ \\ \hline
\end{tabular}
}
\label{tab:Example3_ARMSE_for_pulses}
\vspace{-5mm}
\end{table}

Finally, in order to demonstrate the impact of the laser modulation signal $\m{i}$ on the accuracy of the estimation, we repeated the simulation by using the same short pulse as in Example~\ref{sec:Demonstration_1}. For a fair comparison, we set the energy of the short pulse to be equal to the energy of the chirp signal. It can be seen that the resulting ARMSE of the estimations $\hat{\bm{\mu}}$ in Tab.~\ref{tab:Example3_ARMSE_for_pulses} are better for the short pulse in this example. In the following section we reshape the laser modulation signal $\m{i}$ in such a way that the error of the estimation is minimized. 


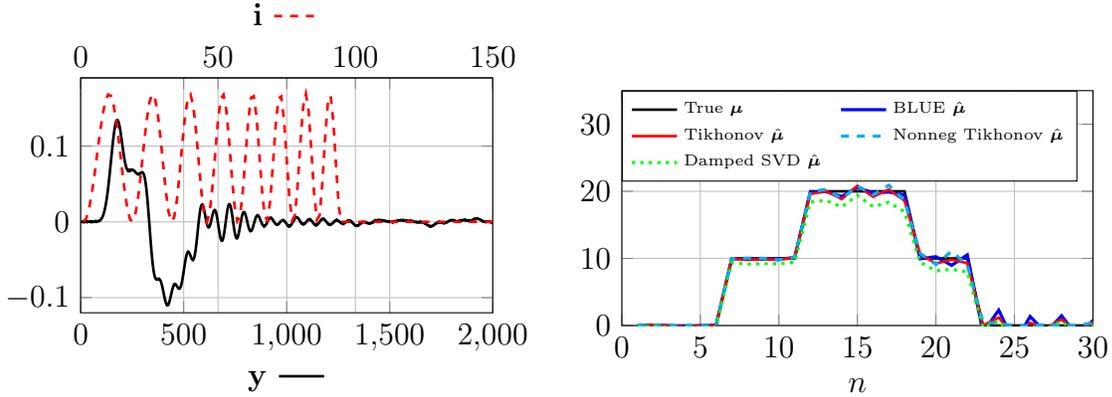
\begin{figure}[!t]
\centering
\vspace{2mm}
  \subfigure[The laser modulation signal $\ve{i}$ and the measured pressure signal $\ve{y}$.]{
	\label{fig:Example3fff}
\begin{tikzpicture}
\pgfplotsset{set layers}
\begin{axis}[
width=0.45\columnwidth, height = .3\columnwidth,
xmin=0,xmax=2000,
ymin=-0.12,ymax=0.19,
xlabel style = {align=center},
xlabel={$\ve{y}$ \ref{p1} },
        legend style={at={(1,0)}, anchor=south east,
								/tikz/column 2/.style={
                column sep=5pt,
            },
        font=\tiny},
grid=major,
]
\addplot[line width=1pt, color=black] table[x index =0, y index =1] {./Example_4_y_and_low_noise.dat};
\label{p1}
\end{axis}
\begin{axis}[
width=0.45\columnwidth, height = .3\columnwidth,
xmin=0,xmax=150,
ymin=-0.12,ymax=0.19,
axis y line=none,
axis x line*=top,
xlabel style = {align=center},
xlabel={$\ve{i}$ \ref{fig:Example3_chirp} },
        legend style={at={(1,0)}, anchor=south east,
								/tikz/column 2/.style={
                column sep=5pt,
            },
        font=\tiny},
grid=major,
]
\addplot[line width=1pt, color=red, style=dashed] table[x index =0, y index =3] {./Example_4_y_and_low_noise.dat};
\label{fig:Example3_chirp}
\end{axis}
\end{tikzpicture}
  } \hspace{0mm}
	\subfigure[The true and estimated absorption profile.]{
	\label{fig:Example3}	
	\begin{tikzpicture}
    \begin{axis}[%
    compat=newest, 
		width=0.5\columnwidth, height = .3\columnwidth,
		ylabel style={align=center}, 
		ylabel style={text width=3.4cm}, 
		xlabel=$n$,
		legend pos=north west, 
		legend cell align=left,
		legend columns=2, 
        legend style={at={(0,1)}, anchor=north west,
								/tikz/column 2/.style={
                column sep=5pt,
            },
        font=\tiny},
		xmin = 0,
		xmax = 30,
		ymax = 35,
		ymin = 0,
		grid=major]
			\addplot[line width=1pt, color=black] table[x index =0, y index =1] {./Example_4_low_noise_mu.dat};
			\addlegendentry{True $\bm{\mu}$}
			\addplot[line width=1.2pt, color=blue] table[x index =0, y index =2] {./Example_4_low_noise_mu.dat};
			\addlegendentry{BLUE $\hat{\bm{\mu}}$}
			\addplot[line width=1pt, color=red] table[x index =0, y index =3] {./Example_4_low_noise_mu.dat};
			\addlegendentry{Tikhonov $\hat{\bm{\mu}}$}
			\addplot[line width=1.2pt, color=cyan, style=dashed] table[x index =0, y index =4] {./Example_4_low_noise_mu.dat};
			\addlegendentry{Nonneg Tikhonov $\hat{\bm{\mu}}$}
			\addplot[line width=1.2pt, color=green, style=dotted] table[x index =0, y index =5] {./Example_4_low_noise_mu.dat};
			\addlegendentry{Damped SVD $\hat{\bm{\mu}}$}		
    \end{axis}	
	\end{tikzpicture}
  } \hspace{0mm}
	\centering
\caption{Application of the linear model for estimating the absorption profile.}
\label{fig:example3_reconstruction}
\end{figure}

\begin{figure}[!t]
\centering
\vspace{2mm}
  \subfigure[Relation between the RMSE of $\hat{\ve{d}}$ and $\hat{\bm{\mu}}$ for the BLUE and the Tikhonov regularization.]{
\begin{tikzpicture}

    \begin{axis}[%
    	name=plot1,
    	compat=newest, 
		width=0.5\columnwidth, height = .3\columnwidth, xlabel=$n$, 
		ylabel style={align=center}, 
		ylabel style={text width=3.4cm},
		legend pos=north west, 
		legend cell align=left,
		legend columns=1, 
        legend style={at={(0,1)}, anchor=north west,
            /tikz/column 2/.style={
                column sep=5pt,
            },
        font=\tiny},
		xmin = 0,
		xmax = 30,
		ymax = 4.5e0,
		ymin = 0,
		grid=major]
			\addplot[line width=1pt, color=blue, style=dotted] table[x index =0, y index =1] {./Example_4_ARMSE_low_noise.dat};
			\addlegendentry{{BLUE $\text{RMSE}(\hat{\bm{\mu}})$}}
			
			\addplot[line width=0.5pt, color=blue] table[x index =0, y index =2] {./Example_4_ARMSE_low_noise.dat};
			\addlegendentry{{BLUE $\text{RMSE}(\hat{\ve{d}})$}}
			
			\addplot[line width=1pt, color=red, style=dotted] table[x index =0, y index =3] {./Example_4_ARMSE_low_noise.dat};
			\addlegendentry{{Tikhonov $\text{RMSE}(\hat{\bm{\mu}})$}}
			
			\addplot[line width=0.5pt, color=red] table[x index =0, y index =4] {./Example_4_ARMSE_low_noise.dat};
			\addlegendentry{{Tikhonov $\text{RMSE}(\hat{\ve{d}})$}}
						
    \end{axis}
    \label{fig:Example3_sdt_dev_a}
\end{tikzpicture}
  } \hspace{0mm}
	\subfigure[Relation between the RMSE of $\hat{\ve{d}}$ and $\hat{\bm{\mu}}$ for the non-negative Tikhonov regularization and the damped SVD.]{
\begin{tikzpicture}

    \begin{axis}[%
    	name=plot1,
    	compat=newest, 
		width=0.5\columnwidth, height = .3\columnwidth, xlabel=$n$, 
		ylabel style={align=center}, 
		ylabel style={text width=3.4cm},
		legend pos=north west, 
		legend cell align=left,
		legend columns=1, 
        legend style={at={(0,1)}, anchor=north west,
            /tikz/column 2/.style={
                column sep=5pt,
            },
        font=\tiny},
		xmin = 0,
		xmax = 30,
		ymax = 4.5e0,
		ymin = 0,
		grid=major]
			\addplot[line width=1pt, color=cyan, style=dotted] table[x index =0, y index =5] {./Example_4_ARMSE_low_noise.dat};
			\addlegendentry{{Nonneg Tikhonov $\text{RMSE}(\hat{\bm{\mu}})$}}
			
			\addplot[line width=1pt, color=cyan] table[x index =0, y index =6] {./Example_4_ARMSE_low_noise.dat};
			\addlegendentry{{Nonneg Tikhonov $\text{RMSE}(\hat{\ve{d}})$}}
			
			\addplot[line width=1pt, color=green, style=dotted] table[x index =0, y index =7] {./Example_4_ARMSE_low_noise.dat};
			\addlegendentry{{Damped SVD $\text{RMSE}(\hat{\bm{\mu}})$}}
			
			\addplot[line width=0.5pt, color=green] table[x index =0, y index =8] {./Example_4_ARMSE_low_noise.dat};
			\addlegendentry{{Damped SVD $\text{RMSE}(\hat{\ve{d}})$}}
						
    \end{axis}
    \label{fig:Example3_sdt_dev_b}
\end{tikzpicture}
  } \hspace{0mm}
	\centering
\caption{RMSE of $\hat{\ve{d}}$ and $\hat{\bm{\mu}}$  averaged over $100$ simulation runs for all grid points within the area of interest $n = 0, \hdots, N_\ve{z}-1$ with average $\text{SNR}(\m{y})=71.4$.}
\label{fig:Example3_sdt_dev}
\end{figure}

\begin{figure}[!t]
\centering
\subfigure[ARMSE of $\hat{\m{d}}$.]{
	\begin{tikzpicture}
    \begin{loglogaxis}[%
   	name=plot1,
   	compat=newest, 
		unit vector ratio*=0.4 1 1,
		scale=0.8,
		xlabel=$\sigma_w^2$, 
		ylabel style={align=center}, 
		ylabel style={text width=3.4cm},
		scaled x ticks=base 10:-1,
		legend pos=north west, 
		legend cell align=left,
		legend columns=1, 
		xmin=1e-10, xmax=1e-1,
	  ymin=1e-1, ymax=1.9e+2,
        legend style={at={(0,1)}, anchor=north west,
            /tikz/column 2/.style={
                column sep=5pt,
            },
        font=\tiny},
		grid=major]
			\addplot[line width=1pt, color=blue] table[x index =0, y index =1] {./Example_4_ARMSE_mu.dat};
			\addlegendentry{{BLUE}}
			
			\addplot[line width=1pt, color=red] table[x index =0, y index =2] {./Example_4_ARMSE_mu.dat};
			\addlegendentry{{Tikhonov}}

			\addplot[line width=1pt, color=cyan, style=dashed] table[x index =0, y index =3] {./Example_4_ARMSE_mu.dat};
			\addlegendentry{{Nonneg Tikhonov}}

			\addplot[line width=1pt, color=green, style=dotted] table[x index =0, y index =4] {./Example_4_ARMSE_mu.dat};
			\addlegendentry{{Damped SVD}}

    \end{loglogaxis}
    \begin{loglogaxis}[%
   	name=plot1,
   	compat=newest, 
		unit vector ratio*=0.4 1 1,
		scale=0.8,
		axis y line=none,
		axis x line*=top,
		xlabel style={align=center}, 
		xlabel style={text width=3.4cm},
		xlabel={$\text{SNR}(\m{y})$ in dB}, 
		xtick=data,
		xticklabels={71.4, 61.4, 41.5, 21.5, 1.5, -18.5},
		legend pos=north west, 
		legend cell align=left,
		legend columns=1, 
		xmin=1e-10, xmax=1e-1,
	  ymin=1e-1, ymax=1.9e+2,
		]
    \end{loglogaxis}	
\end{tikzpicture}
\label{fig:Example4_mu_armse}
}
\subfigure[ARMSE of $\hat{\bm{\mu}}$.]{
	\begin{tikzpicture}
    \begin{loglogaxis}[%
   	name=plot1,
   	compat=newest, 
		unit vector ratio*=0.4 1 1,
		scale=0.8,
		xlabel=$\sigma_w^2$, 
		ylabel style={align=center}, 
		ylabel style={text width=3.4cm},
		scaled x ticks=base 10:-1,
		legend pos=north west, 
		legend cell align=left,
		legend columns=1, 
		xmin=1e-10, xmax=1e-1,
	  ymin=1e-1, ymax=1.9e+2,
        legend style={at={(0,1)}, anchor=north west,
            /tikz/column 2/.style={
                column sep=5pt,
            },
        font=\tiny},
		grid=major]
			\addplot[line width=1pt, color=blue] table[x index =0, y index =1] {./Example_4_ARMSE_d.dat};
			\addlegendentry{{BLUE}}
			
			\addplot[line width=1pt, color=red] table[x index =0, y index =2] {./Example_4_ARMSE_d.dat};
			\addlegendentry{{Tikhonov}}

			\addplot[line width=1pt, color=cyan, style=dashed] table[x index =0, y index =3] {./Example_4_ARMSE_d.dat};
			\addlegendentry{{Nonneg Tikhonov}}

			\addplot[line width=1pt, color=green, style=dotted] table[x index =0, y index =4] {./Example_4_ARMSE_d.dat};
			\addlegendentry{{Damped SVD}}

    \end{loglogaxis}
    \begin{loglogaxis}[%
   	name=plot1,
   	compat=newest, 
		unit vector ratio*=0.4 1 1,
		scale=0.8,
		axis y line=none,
		axis x line*=top,
		xlabel style={align=center}, 
		xlabel style={text width=3.4cm},
		xlabel={$\text{SNR}(\m{y})$ in dB}, 
		xtick=data,
		xticklabels={71.4, 61.4, 41.5, 21.5, 1.5, -18.5},
		scaled x ticks=base 10:-1,
		legend pos=north west, 
		legend cell align=left,
		legend columns=1, 
		xmin=1e-10, xmax=1e-1,
	  ymin=1e-1, ymax=1.9e+2,
		]
    \end{loglogaxis}	
\end{tikzpicture}
\label{fig:Example4_d_armse}
}
\caption{ARMSE of the estimated quantities $\m{d}$ and $\bm{\mu}$ for Example 2.}%
\label{fig:example4_opt}%
\end{figure}

\begin{figure}[!t]
\centering
\vspace{4mm}
	\subfigure[The measurement vector with and without noise.]{
\begin{tikzpicture}
\pgfplotsset{set layers}
\begin{axis}[
width=0.45\columnwidth, height = .3\columnwidth,
xmin=0,xmax=2000,
ymin=-0.2,ymax=0.2,
xlabel=$k$,
        legend style={at={(1,0)}, anchor=south east,
								/tikz/column 2/.style={
                column sep=5pt,
            },
        font=\tiny},
grid=major,
]
\addplot[line width=1pt, color=black] table[x index =0, y index =1] {./Example_4_y_and_noise.dat};
\addlegendentry{{Noisy measurement $\m{y}$}}
\addplot[line width=0.8pt, color=red] table[x index =0, y index =2] {./Example_4_y_and_noise.dat};
\addlegendentry{{Noise free measurement $\m{y}$}}
\end{axis}
\label{fig:Example4_mu_est_2}
\end{tikzpicture}
  } \hspace{0mm}
	\subfigure[The true and estimated absorption profiles.]{

\begin{tikzpicture}
    \begin{axis}[%
    compat=newest, 
		width=0.5\columnwidth, height = .3\columnwidth,
		ylabel style={align=center}, 
		ylabel style={text width=3.4cm}, 
		xlabel=$n$,
		legend pos=north west, 
		legend cell align=left,
		legend columns=2, 
        legend style={at={(0,1)}, anchor=north west,
								/tikz/column 2/.style={
                column sep=5pt,
            },
        font=\tiny},
		xmin = 0,
		xmax = 30,
		ymax = 32,
		ymin = -2,
		grid=major]
			\addplot[line width=1pt, color=black] table[x index =0, y index =1] {./Example_4_high_noise_mu.dat};
			\addlegendentry{True $\bm{\mu}$}
			\addplot[line width=1pt, color=red] table[x index =0, y index =3] {./Example_4_high_noise_mu.dat};
			\addlegendentry{Tikhonov $\hat{\bm{\mu}}$}
			\addplot[line width=1.2pt, color=cyan, style=dashed] table[x index =0, y index =4] {./Example_4_high_noise_mu.dat};
			\addlegendentry{Nonneg Tikhonov $\hat{\bm{\mu}}$}
			\addplot[line width=1.2pt, color=green, style=dotted] table[x index =0, y index =5] {./Example_4_high_noise_mu.dat};
			\addlegendentry{Damped SVD $\hat{\bm{\mu}}$}		
    \end{axis}
		\label{fig:Example4_mu_est_3}	
	\end{tikzpicture}
  }
\caption{Application of the linear model for estimating the absorption profile with high measurement noise: $\sigma_w^2=10^{-3},\, \text{SNR}(\m{y})=1.5$ dB.}
\label{fig:example4_opt_reconstruction}
\end{figure}
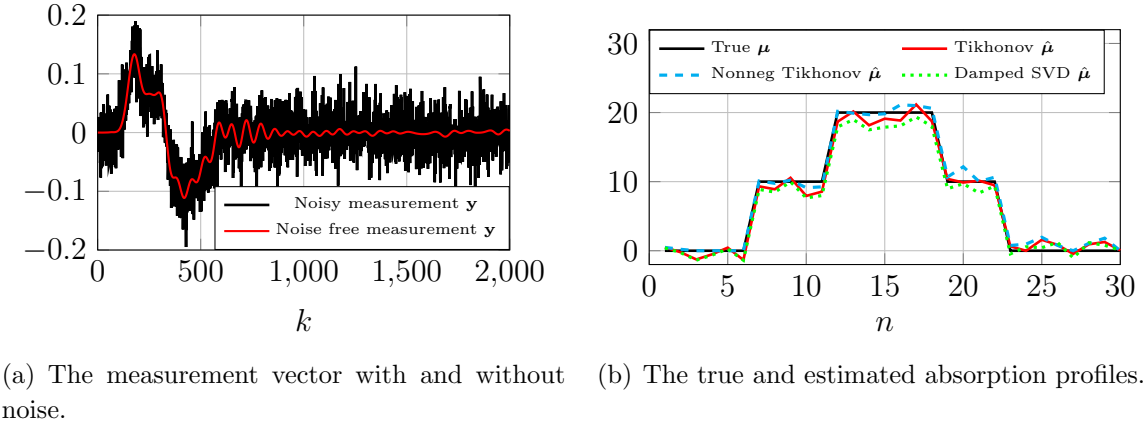

\subsection{Example for optimizing the laser modulation signal}
\label{sec:Demonstration_4}

Now we demonstrate the performance gain using an optimized laser modulation signal obtained by the methods described in Section~\ref{sec:Optimization}. It will turn out that the resulting optimized laser modulation signal strongly depends on the considered problem. In order to demonstrate this relation, two different experiments with same optimization constraints are performed. More precisely, we utilize the following constraints: 
\begin{enumerate}
\item The elements of $\ve{i}$ must be larger than or equal to $0$ and smaller than or equal to $1$. 
\item $\ve{i}$ must be time-limited with a maximum length $N_\ve{i}=50$, or $N_\ve{i}=100$ for Examples~\ref{sec:Demonstration_1}-\ref{sec:Demonstration_3}, respectively. 
\item In order to provide a fair comparison between different laser modulation schemes, their energy shall be equal. This constraint is implemented via
\begin{align}
 \left\|\ve{i}\right\|_2^2=\sum_{k=0}^{N_{\ve{i}}-1} i_k^2 = 1. \label{equ:RECENDT_039}
\end{align} 
\item The last constraint is that $\ve{i}$ shall be band-limited. This band-limitation is implemented the following way. Let $\m{B}_1$ be a matrix that inserts $N_\text{zeros}$ number of zeros at the beginning and the end of $\ve{i}$
\begin{align}
\m{B}_1 = \begin{bmatrix} \m{0}^{N_{\text{zeros}} \times N_{\ve{i}}} \\ \m{I}^{N_{\ve{i}} \times N_{\ve{i}}} \\ \m{0}^{N_{\text{zeros}} \times N_{\ve{i}}} \end{bmatrix} \in \mathbb{R}^{N_{\ve{i}} + 2N_{\text{zeros}} \times N_{\ve{i}}}.
\end{align}
We utilize the discrete Fourier transform (DFT) to transform $\m{B}_1\ve{i}$ into the frequency domain. Let $\m{F}$ denote the DFT matrix of size $N_\text{dft} \times N_\text{dft}$ with $N_\text{dft} = N_{\ve{i}} + 2 N_\text{zeros}$. The product $\ve{q} = \m{F} \m{B}_1  \ve{i}$ produces a double-sided discrete spectrum of $\ve{i}$. The single-sided spectrum $\ve{q}_\text{s} \in \mathbb{C}^{N_\mathrm{s} \times 1}$ has a length $N_\text{s}$ 
\begin{align}
N_\mathrm{s} = \begin{cases}   N_\text{dft}/2+1  & N_\text{dft} \text{ is even} \\
 (N_\text{dft}+1)/2  & N_\text{dft}  \text{ is odd} \end{cases}\,, \label{equ:RECENDT_023q}
\end{align}
and $\ve{q}_\text{s}$ is given by 
\begin{align}
\ve{q}_\text{s} = \m{B}_2 \ve{q} = \m{B}_2 \m{F} \m{B}_1  \ve{i}\,, \label{equ:RECENDT_023w}
\end{align}
where $ \m{B}_2 = \begin{bmatrix} \m{B}_3 & \m{0}^{N_\mathrm{s} \times N_\text{dft} - N_\mathrm{s}}  \end{bmatrix}$ and where $\m{B}_3$ is a diagonal matrix of size $N_\mathrm{s} \times N_\mathrm{s}$ whose first diagonal element is $1$ and all remaining diagonal elements are $2$.  A possible way of generating a band-limited $\ve{i}$ is to enforce
\begin{align}
\text{abs}\left( \ve{q}_\mathrm{hf} \right) = \text{abs}\left(\m{B}_4 \ve{q}_\text{s} \right) \leq \bm{\varepsilon}, \label{equ:RECENDT_040} 
\end{align} 
where $\m{B}_4 = \begin{bmatrix} \m{0}^{N_\mathrm{hf} \times N_\mathrm{s} - N_\mathrm{hf}} & \m{I}^{N_\mathrm{hf} \times N_\mathrm{hf}}  \end{bmatrix}$ is a matrix that sorts out the $N_\mathrm{hf}$ highest frequencies of the single-sided spectrum, and $\bm{\varepsilon} \in \mathbb{R}^{N_\mathrm{hf} \times 1}$ is a vector with positive but arbitrary small values. In this example, we consider a uniform $\bm{\varepsilon}$ according to $\bm{\varepsilon}  =\epsilon \ve{1}$, where $\ve{1}$ is a column vector of length $N_\mathrm{hf}$ with all elements being $1$. 
\end{enumerate}

We mainly use the same settings as for Examples~\ref{sec:Demonstration_1}-\ref{sec:Demonstration_3} with additional parameters for the optimization process that are listed in Tab.~\ref{tab:Example4}. For Example~\ref{sec:Demonstration_1}, the optimization process was initialized with a random sequence of length $N_\ve{i}=50$ and with values between $0$ and $1$. In Fig.~\ref{fig:Example4}, the optimal laser modulation signal $\ve{i}_\text{opt}$ is indicated by the red curve, while Fig.~\ref{fig:Example4_spectra} shows the constrained single-sided spectra derived according to \eqref{equ:RECENDT_023w}. Inspecting $\ve{i}_\text{opt}$ for Example~\ref{sec:Demonstration_1} reveals that, within the scope of this simulation and the utilized constraints, it is better to use a dense sequence of short pulses than a single pulse as laser modulation signal. The performance gain of the optimized laser modulation signal is also remarkable. Tab.~\ref{tab:Example4_ARMSE} presents the ARMSE over 100 simulation runs for the short pulse, the chirp, and the optimized laser modulation signal $\ve{i}_\text{opt}$. This clearly demonstrates that the optimized laser modulation signal is superior to the others in terms of the reconstruction accuracy. For Example~\ref{sec:Demonstration_1}, the table reveals that the optimization process reduces the ARMSE values by approximately one order of magnitude compared to the short pulse and two order of magnitudes compared to the chirp signal. For Example~\ref{sec:Demonstration_3}, the short pulse and the chirp laser modulation provided good estimations (cf.\ Tab.~\ref{tab:Example3_ARMSE_for_pulses}), which was not improved by the optimization in case of low noise level, i.e.\ $\sigma_w^2=10^{-10}$. To this end, we considered the worst-case scenario with high measurement noise, i.e.\ $\sigma_w^2=10^{-3},\,\text{SNR}(\m{y})=3.5$~dB. Now the optimization process was initialized by the chirp signal. Tab.~\ref{tab:Example4_ARMSE} shows again an improvement in the estimation accuracy for $\ve{i}_\text{opt}$, which is indicated by the black curve in Fig.~\ref{fig:Example4}.


\begin{table}[h]
\begin{center}
	\caption{Model parameters used for the third example. \label{tab:Example4} }
  \begin{tabular}{ | c | c |  }
  	\hline
  	\textbf{Parameter} & \textbf{Value}   \\
    \hline \hline
    $N_\ve{i}$ 		& $50\,(\text{or } 100)$  		 	\\ \hline
    $N_\text{hf}$ 	& $15$ 				\\ \hline
    $N_\text{zeros}$ 	& $5$ 			\\ \hline
    $\epsilon$ 		& $10^{-3}$ \\ \hline
  \end{tabular}
\end{center}
\end{table}

\begin{figure}[!t]
\centering
\vspace{2mm}
  \subfigure[Results of the optimization process.]{
\label{fig:Example4} 
\begin{tikzpicture}
    \begin{axis}[%
    	name=plot1,
    	compat=newest, 
		width=0.43\columnwidth, height = .3\columnwidth, xlabel=$k$, 
		ylabel style={align=center}, 
		ylabel style={text width=3.4cm},
		legend pos=north east, 
		legend cell align=left,
		legend columns=1, 
    legend style={at={(1,1)}, anchor=north east,
            /tikz/column 2/.style={
                column sep=5pt,
            },
        font=\small},
		xmin = 0,
		xmax = 100,
		ymax = 0.5,
		ymin = 0,
		grid=major]
			\addplot[line width=1pt, color=red, style=dashed] table[x index =0, y index =1] {./Opt_Laser_Mod_Sigs_1.dat};
			\addlegendentry{{$\ve{i}_\text{opt}$ for Example~\ref{sec:Demonstration_1}}}
			\addplot[line width=1pt, color=black] table[x index =0, y index =1] {./Opt_Laser_Mod_Sigs_4.dat};
			\addlegendentry{{$\ve{i}_\text{opt}$ for Example~\ref{sec:Demonstration_3}}}
    \end{axis}
\end{tikzpicture}
  } \hspace{0mm}
	\subfigure[Constrained single-sided spectra of the optimized laser modulation signals.]{
	\label{fig:Example4_spectra}	
\begin{tikzpicture}
    \begin{axis}[%
    	name=plot1,
    	compat=newest, 
		width=0.43\columnwidth, height = .3\columnwidth, xlabel=Frequency (GHz), 
		ylabel style={align=center}, 
		ylabel style={text width=3.4cm},
		ylabel={$|\ve{q}_\text{s}|$}, 
		legend pos=north east, 
		legend cell align=left,
		legend columns=1, 
    legend style={at={(1,1)}, anchor=north east,
            /tikz/column 2/.style={
                column sep=5pt,
            },
        font=\small},
		xmin = 0,
		xmax = 3e-1,
		ymax = 9,
		ymin = 0,
		grid=major]
			\addplot[line width=1pt, color=red, style=dashed] table[x index =1, y index =2] {./Opt_Laser_Mod_Sigs_spectra_1.dat};
			\addlegendentry{{$\ve{i}_\text{opt}$ Spectra~\ref{sec:Demonstration_1}}}
			
			\addplot[line width=1pt, color=black] table[x index =1, y index =2] {./Opt_Laser_Mod_Sigs_spectra_4.dat};
			\addlegendentry{{$\ve{i}_\text{opt}$ Spectra~\ref{sec:Demonstration_3}}}
    \end{axis}
\end{tikzpicture}
			%
  } \hspace{0mm}
	\centering
\caption{Properties of the optimized laser modulation signals.}
\label{fig:example4_opt_laser_mod_sig}
\end{figure}
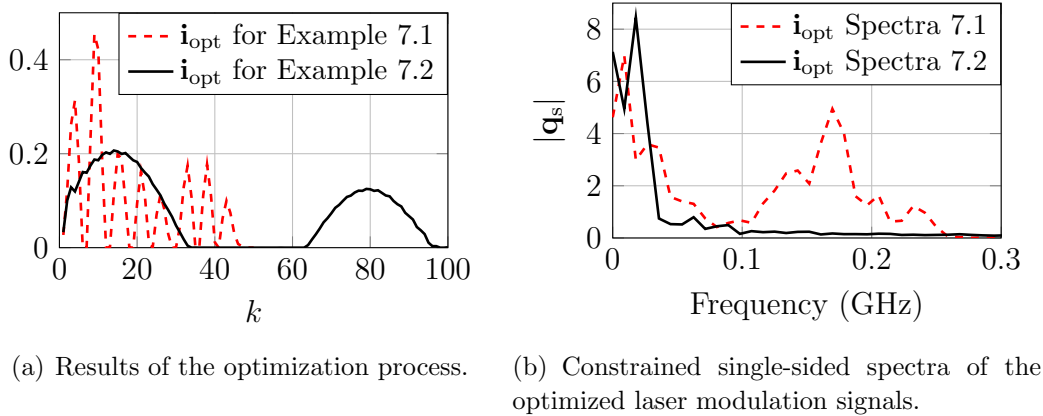

\begin{table}
\centering
	\caption{Estimation accuracy for different laser modulation signal.}
	\scalebox{0.7}{
\begin{tabular}{|c||c|c||c|c|}
 \multicolumn{5}{c}{} \vspace{-0.5em} \\ \hline
\multirow{1}{*}{\parbox{0.30\textwidth}{\vspace{2mm}\textbf{Laser mod.\ signal} \\ \textbf{Example}~\ref{sec:Demonstration_1}, $\sigma_w^2=10^{-10}$}} & \multicolumn{2}{|c||}{\textbf{\textbf{$\textbf{ARMSE}(\hat{\m{d}})$ of the estimations}}} & \multicolumn{2}{|c|}{\textbf{\textbf{$\textbf{ARMSE}(\hat{\bm{\mu}})$ of the estimations}} \bigstrut} \\ \cline{2-5}
\textbf{} & 	\bigstrut \textbf{BLUE} & \textbf{non-negative Tikhonov}  & \textbf{BLUE} & \textbf{non-negative Tikhonov}\\ \hhline{|=====|}
Short pulse & $2.6\cdot 10^{1}$ & $1.3\cdot 10^{1}$ & $4.3\cdot 10^{1}$ & $2.2\cdot 10^{1}$ \\ \hline
Chirp & $8.8\cdot 10^{2}$ & $2.2\cdot 10^{2}$ & $1.5\cdot 10^{3}$ & $3.5\cdot 10^{2}$ \\ \hline
Optimized & $4.8\cdot 10^{0}$ & $3.2\cdot 10^{0}$ & $8.2\cdot 10^{0}$ & $5.3\cdot 10^{0}$ \\ \hline 
\multicolumn{5}{c}{} \vspace{-0.5em} \\ \hline
\multirow{1}{*}{\parbox{0.30\textwidth}{\vspace{2mm}\textbf{Laser mod.\ signal} \\ \textbf{Example}~\ref{sec:Demonstration_3}, $\sigma_w^2=10^{-3}$}} & \multicolumn{2}{|c||}{\textbf{\textbf{$\textbf{ARMSE}(\hat{\m{d}})$ of the estimations}}} & \multicolumn{2}{|c|}{\textbf{\textbf{$\textbf{ARMSE}(\hat{\bm{\mu}})$ of the estimations}}\bigstrut} \\ \cline{2-5}
\textbf{} & 	\bigstrut \textbf{BLUE} & \textbf{non-negative Tikhonov}  & \textbf{BLUE} & \textbf{non-negative Tikhonov}\\ \hhline{|=====|}
Short pulse & $2.1\cdot 10^{0}$ & $1.1\cdot 10^{0}$ & $3.8\cdot 10^{0}$ & $1.7\cdot 10^{0}$ \\ \hline
Chirp & $3.1\cdot 10^{0}$ & $7.5\cdot 10^{-1}$ & $5.1\cdot 10^{0}$ & $1.1\cdot 10^{0}$ \\ \hline
Optimized & $1.2\cdot 10^{0}$ & $4.6\cdot 10^{-1}$ & $2.1\cdot 10^{0}$ & $7.3\cdot 10^{-1}$ \\ \hline
\end{tabular}
}
\label{tab:Example4_ARMSE}
\vspace{-5mm}
\end{table}

\section{Conclusion}
We developed a method for reconstructing the absorption profile in photoacoustic imaging based on surface measurements of the ultrasound waves. For approximating the original Stokes' PDE we introduced a discrete linear SSM. This approximation accounts for frequency dependent attenuation of the ultrasound waves as well as a decrease in laser intensity due to absorption. Then we proved that the parameters of the SSM can be chosen in such a way that the model is asymptotically stable, observable and controllable. The conditions of these properties are simple, and thus they can be easily verified. We also emphasize that our algorithm is of general nature, namely it allows for inhomogeneous probes with an arbitrary absorption profile. In addition, the SSM allows to linearly estimate a certain vector that depends on the absorption profile. Based on this algorithm, the absorption profile is estimated via a non-linear routine. We provided several simulations that demonstrate the reconstruction accuracy of the proposed approach for different noise levels. In these experiments, various regularization methods were studied to overcome the ill-posedness of the problem. 

Furthermore, a method to optimize the laser modulation signal has been introduced such that the accuracy of the estimated absorption profile is maximized. Utilizing the optimized laser modulation signals may yield a significant increase in reconstruction accuracy compared to short pulses as well as chirp modulation. The concrete performance gain depends on the utilized constraints of the optimization process. For a limited frequency bandwidth the result of the reconstruction gets better if a single short pulse is separated into smaller pulses having in total the same energy as the single short pulse. Sometimes it might be advantageous to use several smaller pulses instead of one larger excitation pulse, e.g.\ to meet safety guidelines for the maximum light fluence or because of power limitation of the excitation laser. If adequate reconstruction is used, such as our proposed state space model, the reconstruction error is the same for more pulses with less amplitude if the total energy is the same. 

Compensation of acoustic attenuation and dispersion in two or three dimensions can always be reduced to a one-dimensional problem in a two-stage process: first, for each detector location the ideal signal in the absence of attenuation is calculated from the measured signal. This one-dimensional reconstruction can be performed with our presented state space model approach. In a second step, any reconstruction method for photoacoustic tomography can be used for reconstructions in higher dimensions \cite{Riviere2005, LaRiviere2006}. In two- or three-dimensional photoacoustic imaging beside optical absorption also optical scattering reduces the light fluence with increasing depth. In a semi-infinite medium with constant illumination on its surface the decreasing illumination with depth can be described by an effective attenuation coefficient, including optical absorption and scattering \cite{Dean-Ben2011}. 

\ack
This work was supported by the “K-Project for non-destructive testing and tomography plus” supported by COMET program of FFG and the federal government of Upper Austria and Styria and by the project "multimodal and in-situ characterization of inhomogeneous materials" (MiCi) by the federal government of Upper Austria and the European Regional Development Fund (EFRE) in the framework of the EU-program IWB2020.

\section*{References}

\bibliographystyle{unsrt}
\bibliography{References}

\end{document}